\DeclareMathOperator*{\argmin}{arg\,min}
\DeclareMathOperator*{\sign}{sign}
\def\prox{{\rm Prox}}
\def\A{{\mathcal A}}
\def\I{{\mathcal I}}
\newtheorem{lemma}{Lemma}[section]
\newtheorem{theorem}{Theorem}[section]
\newtheorem{remark}{Remark}[section]
\begin{document}

\def\spacingset#1{\renewcommand{\baselinestretch}%
{#1}\small\normalsize} \spacingset{1}


{
\title{\bf A Global Two-stage Algorithm for Non-convex Penalized High-dimensional Linear Regression Problems}

\author{
Peili Li\thanks{School of Statistics, KLATASDS-MOE, East China Normal University, Shanghai 200062, PR China (Email: plli@sfs.ecnu.edu.cn).}
\and
Min Liu\thanks{School of Mathematics and Statistics, Wuhan University, Wuhan 430072, PR China (Email: mliuf@whu.edu.cn).}
\and
Zhou Yu\thanks{School of Statistics, KLATASDS-MOE, East China Normal University, Shanghai 200062, PR China (Email: zyu@stat.ecnu.edu.cn).}
}

\date{}
\maketitle
}


\bigskip
\begin{abstract}
By the asymptotic oracle property, non-convex penalties represented by minimax concave penalty (MCP) and smoothly clipped absolute deviation (SCAD) have attracted much attentions in high-dimensional data analysis, and have been widely used in signal processing, image restoration, matrix estimation, etc.
However, in view of their non-convex and non-smooth characteristics, they are computationally challenging. Almost all existing algorithms converge locally, and the proper selection of initial values is crucial. Therefore, in actual operation, they often combine a warm-starting technique to meet the rigid requirement that the initial value must be sufficiently close to the optimal solution of the corresponding problem. In this paper, based on the DC (difference of convex functions) property of MCP and SCAD penalties, we aim to design a global two-stage algorithm for the high-dimensional least squares linear regression problems. A key idea for making the proposed algorithm to be efficient is to use the primal dual active set with continuation (PDASC) method, which is equivalent to the semi-smooth Newton (SSN) method, to solve the corresponding sub-problems. Theoretically, we not only prove the global convergence of the proposed algorithm, but also verify that the generated iterative sequence converges to a d-stationary point. In terms of computational performance, the abundant research of simulation and real data show that the algorithm in this paper is superior to the latest SSN method and the classic coordinate descent (CD) algorithm for solving non-convex penalized high-dimensional linear regression problems.
\end{abstract}

\noindent%
{\it Keywords:} High-dimensional linear regression, global convergence, two-stage algorithm, primal dual active set with continuation algorithm, difference of convex functions.
\vfill

\spacingset{1.75} 
\setcounter{equation}{0}
\section{Introduction}\label{intro}

In this paper, we mainly consider the following high-dimensional linear regression model:
\begin{align}\label{LR}
y=X \beta^{*}+\varepsilon,
\end{align}
where $y\in \mathbb{R}^{n}$ is the response vector, $X\in \mathbb{R}^{n\times p}$ is the design matrix, $\varepsilon\in \mathbb{R}^{n}$ is the noise vector, and $\beta^{*}$ is the underlying regression coefficient. In the high-dimensional settings, the number of predictors $p$ is usually larger or much larger than the number of observations $n$. At this time, we usually assume that $\beta^{*}$ is sparse, that is, only a small part of its elements are non-zero. If this idea is expressed in the parameter estimation models, it is natural to add the constraint $\|\beta\|_0\leq s$, where $\|\beta\|_0$ denotes the number of non-zero elements in $\beta$, and $s>0$ is a tuning parameter which controls the sparsity level. However, the non-convexity and discontinuity of the $\ell_0$ pseudo-norm make it NP-hard to solve the corresponding problems \cite{N1995}. Especially in the high-dimensional settings, it is very challenging to design a feasible algorithm that can achieve accurate solutions. Therefore, various surrogates of the $\ell_0$ pseudo-norm have been proposed in the existing literature and have been widely studied in statistics, optimization, computational mathematics, machine learning and other fields.

The first type of surrogate functions is mainly the well-known $\ell_1$ norm \cite{CDS2001,FJL2014,T1996}, and its corresponding Lagrangian form of least squares  linear regression model is the following convex but non-smooth minimization problem:
\begin{equation}\label{L1}
\min_{\beta\in\mathbb{R}^p}\Big\{\frac{1}{2}\|X \beta-y\|^2+\lambda \|\beta\|_1\Big\},
\end{equation}
where $\|\beta\|_1=\sum_{i=1}^{p}|\beta_i|$ denotes the $\ell_1$ norm of the vector $\beta$, $\lambda>0$ is a regularization parameter. In view of the good characteristics of the above model, it has received extensive attention in different application fields. Theoretically, under certain conditions on the design matrix $X$ and the sparsity level of the underlying regression coefficient $\beta^{*}$, the minimizers of (\ref{L1}) have attractive statistical properties \cite{CT2005,MB2006,ZY2006}. Numerically, the convexity of (\ref{L1}) has led to many fast and effective algorithms, such as least angle regression (LARS)\cite{EHJT2004}, alternating direction method of multipliers (ADMM) \cite{BPC2011}, coordinate descent(CD) method  \cite{WL2008} and semi-smooth Newton (SSN) method (or equivalent primal dual active set (PDAS) algorithm) \cite{HIK2002,LST2018} etc. It is worth emphasizing that the PDAS algorithm in \cite{FJL2014} not only has the local superlinear convergence which can be obtained by reformulating it in the SSN framework, but also has the locally one step convergence under certain conditions. In addition, the continuation technique on the regularization parameter globalizes the convergence of the algorithm. In this paper, we will apply it to solve internal sub-problems, and one can see Section \ref{BSSN} for details.

Although the convexity of $\ell_1$ penalty makes the corresponding problem computationally attractive, there still exists bias in its estimator. Therefore, scholars proposed the second type of surrogate functions for $\ell_0$ pseudo-norm, which mainly contains some non-convex penalties, such as the minimax concave penalty (MCP) \cite{Z2010}, the smoothly clipped absolute deviation (SCAD) penalty \cite{FL2001}, capped $\ell_1$ \cite{Z2010b} and bridge \cite{FF1993,F1998} etc. Numerous studies have shown that, compared with a convex relaxation with the $\ell_1$ norm, a proper non-convex penalty method can achieve a sparse estimation with fewer measurements, and is more robust against noises \cite{C2007,CG2014}. Therefore, non-convex penalties have been widely used in various sparse learning problems \cite{BH2011,C2007,CG2014,GZL2013,HJJ2019,LYG2017,MFH2011}.

In this paper, we mainly focus on the least squares regression model with MCP or SCAD penalty, i.e.,
\begin{equation}\label{primal}
\min_{\beta\in\mathbb{R}^p}\Big\{\frac{1}{2}\|X \beta-y\|^2+\sum_{i=1}^{p}\rho(\beta_i;\lambda,\tau)\Big\},
\end{equation}
where $\rho(\cdot;\lambda,\tau)$ is the MCP or SCAD penalty, which are respectively defined by
\begin{align}\label{mcp}
\rho_{mcp}(t;\lambda,\tau):=\lambda \int_{0}^{|t|}\max\Big(0,1-\frac{s}{\lambda\tau}\Big)ds=\left\{
  \begin{array}{ll}
    \frac{\lambda^2\tau}{2}, &|t|> \lambda \tau,\\
    \lambda (|t|-\frac{t^2}{2\lambda \tau}), & |t|\leq\lambda \tau, \quad \tau>1,
  \end{array}\right.
\end{align}
\begin{align}\label{scad}
\rho_{scad}(t;\lambda,\tau):=\lambda \int_{0}^{|t|}\min\Big(1,\frac{\max(0,\lambda \tau-s)}{\lambda(\tau-1)}\Big)ds
=\left\{
  \begin{array}{lll}
    \frac{\lambda^2(\tau+1)}{2}, & |t|>\lambda \tau,\\
    \frac{\lambda\tau|t|-\frac{1}{2}(t^2+\lambda^2)}{\tau-1}, & \lambda<|t|\leq \lambda \tau, \quad \tau>2.\\
    \lambda |t|, &|t|\leq \lambda,
  \end{array}\right.
\end{align}
Here $\tau$ is a given parameter which controls the concavity of the corresponding penalty. When proposing MCP and SCAD penalties, their authors established that the regression models with MCP and SCAD penalties have the so-called oracle property, that is, in an asymptotic sense, they perform as well as if the analyst had known in advance which coefficients were zero and which were nonzero.

However, non-convex and non-smooth characteristics of the objective function make the numerical calculation of model (\ref{primal}) very challenging. There are several typical algorithms in the existing literature, and here is a simple summary in chronological order. Firstly, the authors of \cite{FL2001,HL2005} proposed a local quadratic approximation (LQA) algorithm and its slightly perturbed version. They suggested iteratively, locally approximating the penalty function by a quadratic function, and then using a modified Newton-Raphson algorithm to solve the corresponding problem. However, the behavior of deleting small coefficients or choosing the size of perturbation will cause numerical instability. To overcome this difficulty, Zou and Li \cite{ZL2008} proposed a new unified algorithm based on the local linear approximation (LLA), and calculated the resulting LASSO problem by LARS algorithm. However, LLA used the path-tracing LARS algorithm to update the regression coefficients, so it is inherently inefficient to some extent. Then, the coordinate descent (CD) type algorithms were designed for the least squares regression models penalized by MCP and SCAD \cite{BH2011,MFH2011}. The numerical results showed that the performance of this algorithm is better than that of LLA. However, the CD-type algorithm requires many iterations in the pursuit of high accuracy, because its convergence rate is sub-linear or linear locally \cite{LP2017}. In addition, it has been proved that each non-convex surrogate function of $\ell_0$ pseudo-norm can be expressed as the difference of two convex functions \cite{APX2017,LPLV2015}. Therefore, based on the DC (difference of convex functions) property of the non-convex functions, Li et al. \cite{LYG2017} proposed a DC proximal Newton (DCPN) method for the general nonlinear problems with non-convex penalty. They firstly used multistage convex relaxation to transform the original optimization into sequences of LASSO regularized nonlinear regressions. Then, in each stage, they used the second order Taylor expansion to approximate the nonlinear loss functions, and adopted the Proximity Newton method in \cite{LSS2014} to solve the convex sub-problem. Under the conditions of locally restricted strong convexity and Hessian smoothness, they proved their algorithm is locally quadratic convergent within each stage of convex relaxation. Recently, Shi et al. \cite{SHJ2018} and Huang et al. \cite{HJJ2019} respectively proposed SSN and PDAS algorithms for the model (\ref{primal}), and their convergence rates are all locally super-linear.

After in-depth study of the relevant literature, we can find that above-mentioned algorithms are all locally convergent, so they generally combine various warm-starting techniques in actual operations. This inspires us to design an effective calculation method with global convergence to weaken the rigid requirement that the initial value must be sufficiently close to the optimal solution. Here we will design a global two-stage algorithm based on the DC expression of MCP and SCAD penalties. From \cite{APX2017,LPLV2015,TWST2020}, we know that MCP and SCAD penalties can be reformulated as:
\begin{align}
\sum_{i=1}^{p}\rho_{mcp}(\beta_i;\lambda,\tau)&=\lambda \|\beta\|_1-q_{mcp}(\beta), \quad \tau>1,\label{mcpreform}\\ \sum_{i=1}^{p}\rho_{scad}(\beta_i;\lambda,\tau)&=\lambda \|\beta\|_1-q_{scad}(\beta),\quad \tau>2,\label{scadreform}
\end{align}
where $q_{mcp}(\beta)=\sum_{i=1}^{p}q_{mcp}(\beta_i;\lambda,\tau)$, $q_{scad}(\beta)=\sum_{i=1}^{p}q_{scad}(\beta_i;\lambda,\tau)$, and
$$q_{mcp}(t;\lambda,\tau)=\left\{
  \begin{array}{ll}
    \lambda |t|-\frac{\lambda^2\tau}{2}, &|t|> \lambda \tau\\
   \frac{t^2}{2\tau}, & |t|\leq\lambda \tau
  \end{array}\right.,\quad q_{scad}(t;\lambda,\tau)=\left\{
  \begin{array}{lll}
    \lambda |t|-\frac{\lambda^2(\tau+1)}{2}, & |t|>\lambda \tau\\
    \frac{(|t|-\lambda)^2}{2(\tau-1)}, & \lambda<|t|\leq \lambda \tau\\
    0, &|t|\leq \lambda
  \end{array}\right..$$ The functions $q_{mcp}(\beta)$ and $q_{scad}(\beta)$ are continuously differentiable with
$$\frac{\partial q_{mcp}(\beta)}{\partial \beta_i}=\left\{
  \begin{array}{ll}
    \lambda \sign(\beta_i), &|\beta_i|> \lambda \tau\\
   \frac{\beta_i}{\tau}, & |\beta_i|\leq\lambda \tau
  \end{array}\right., \quad \frac{\partial q_{scad}(\beta)}{\partial \beta_i}=\left\{
  \begin{array}{lll}
    \lambda \sign(\beta_i), & |\beta_i|>\lambda \tau\\
    \frac{\sign(\beta_i)(|\beta_i|-\lambda)}{\tau-1}, & \lambda<|\beta_i|\leq \lambda \tau\\
    0, &|\beta_i|\leq \lambda
  \end{array}\right..$$
Therefore, the original model (\ref{primal}) can be rewritten as follows,
\begin{equation}\label{dcmodel}
\min_{\beta\in\mathbb{R}^p}\Big\{f(\beta):=\frac{1}{2}\|X \beta-y\|^2+\lambda \|\beta\|_1-q(\beta)\Big\},
\end{equation}
where $q: \mathbb{R}^{p}\rightarrow \mathbb{R}$ is $q_{mcp}$ or $q_{scad}$, which is a convex smooth function. Together with the motivation from global and super-linear proximal majorization-minimization (PMM) algorithm in \cite{TWST2020}, which is proposed for nonconvex square-root-loss regression problems, we are thus inspired to adopt the PMM framework for solving the least squares model (\ref{dcmodel}). A key idea for making the proposed algorithm to be efficient is to use the PDASC algorithm for solving the corresponding sub-problems. Specifically, in the first stage, by directly removing the second term $-q(\beta)$ and adding a proximal term $\frac{\sigma}{2}\|\beta\|^2$, we will use the PDASC method in \cite{FJL2014} to solve the obtained convex sub-problem, which can get an initial point of the second stage. Then in the second stage, we linearize the second term $-q(\beta)$ with respect to the current iteration $\beta^k$ and add an appropriate proximal term $\frac{\sigma}{2}\|\beta-\beta^k\|^2$, then directly use the PDASC method to iteratively solve the resulting problem.

The remainder of this paper is organized as follows. In Section \ref{pre}, we present some preliminaries for our subsequent developments. In Section \ref{alg}, we describe the two-stage algorithm detaily. We establish the algorithm's convergence in Section \ref{the}. In Section \ref{num}, we report numerical experiments to show the efficiency of the algorithm, and do performance comparisons with the latest SSN method and the classic CD algorithm. Finally, we conclude our paper in Section \ref{con}.

\section{Preliminaries}\label{pre}
\setcounter{equation}{0}

We denote the set of all proper lower semicontinuous convex functions on $\mathbb{R}^p$ as $\mathcal{L}(\mathbb{R}^p)$. For a given $f\in \mathcal{L}(\mathbb{R}^p)$, The proximal mapping of $f$ is defined as
$$Prox_{f}(x):=\argmin_{y\in \mathbb{R}^p}\Big\{ f(y)+\frac{1}{2}\|y-x\|^2\Big\}, \quad \forall x\in \mathbb{R}^p.$$
Then, from \cite{MSX2011}, we have
\begin{equation}\label{pri1}
z\in \partial f(y) \Leftrightarrow y=Prox_{f}(y+z).
\end{equation}
The proximal operator of $\|\cdot\|_1$ is given by the pointwise soft-thresholding
operator \cite{DJ1995}:
\begin{equation}\label{pri2}
Prox_{\lambda \|\cdot\|_1}(x)=S_{\lambda}(x),
\end{equation}
where
\begin{equation}\label{soth}
y=S_{\lambda}(x) \Leftrightarrow y_i=\max\{|x_i|-\lambda,0\}\text{sign}(x_i).
\end{equation}

The subdifferential of any $f\in \mathcal{L}(\mathbb{R}^p)$ is a set-value mapping defined by
$$\partial f(x):=\{z\in \mathbb{R}^p:f(y)\geq f(x)+\langle z, y-x\rangle, \forall y\in \mathbb{R}^p\}.$$
The subdifferential of $f=\|x\|_1$ is the pointwise set-value sign function $\text{Sign(x)}$ \cite{DJ1995}, i.e.,
\begin{equation}\label{sub1}
z\in \text{Sign(x)} \Leftrightarrow z_i\left\{
\begin{array}{lll}
=1, &x_i>0\\
\in[-1,1], &x_i=0\\
=-1, &x_i<0
\end{array}
\right..
\end{equation}
The classical Fermat's rule for proper lower semicontinuous convex functions \cite{R1996} asserts
\begin{equation}\label{cfr}
\textbf{0}\in \partial f(x^*)\Leftrightarrow x^* is \ a \ global \ minimizer \ of \ f,
\end{equation}
where $\textbf{0}$ denotes a column vector whose elements are all 0. If the function $f$ is locally Lipschitz continuous near $x^*$ and directionally differentiable at $x^*$, then $0\in \partial f(x^*)$ is equivalent to the directional-stationarity (d-stationarity) of $x^*$, that is
$$f'(x^*;h):=lim_{\delta\rightarrow 0}\frac{f(x^*+\delta h)-f(x^*)}{\delta}\geq0, \forall h \in \mathbb{R}^p.$$
In this paper, we will prove that the iterative sequence of the proposed algorithm converges to a d-stationarity point of problem (\ref{dcmodel}).

\section{Algorithm}\label{alg}
\setcounter{equation}{0}
In this section, we will propose a two-stage proximal majorization-minimization (PMM) algorithm for model (\ref{primal}), and the internal sub-problem with $\ell_1$ penalty will be approximately solved by the primal dual active set with continuation (PDASC) method in \cite{FJL2014}.

\subsection{PMM algorithm}
The PMM algorithm contains two stages, where the first stage provides a good initial point for the second stage. Another key idea to make PMM algorithm effective is to use the PDASC algorithm for solving the corresponding subproblems. Specifically, in the first stage, we get a nonsmooth convex subproblem with $\ell_1$ penalty by directly removing the concave term $-q(\beta)$ and adding a proximal term $\frac{\sigma}{2}\|\beta\|^2$. Then we use PDASC method to approximately solve the obtained subproblem so that the corresponding KKT residual satisfies a prescribed termination criterion. Next, the solution obtained in the first stage is used as the initial value of the second stage. In the second stage, we linearize the concave term $-q(\beta)$ with respect to the current iteration $\beta^k$ and add an appropriate proximal term $\frac{\sigma}{2}\|\beta-\beta^k\|^2$. Then we also use PDASC to solve the corresponding convex sub-problem so that the error vector satisfies a preset accuracy condition. In this way, the second stage is looped and the penalty parameter $\sigma$ is updated iteratively until the iteration sequence satisfies the termination condition given in advance.

Given $\sigma>0$, $\tilde{\beta}\in \mathbb{R}^p$ and $\tilde{v}\in \mathbb{R}^p$, we consider the following minimization problem in each iteration:
\begin{equation}\label{linemode2}
\min_{\beta\in\mathbb{R}^p}J(\beta;\sigma,\tilde{\beta},\tilde{v}):=\frac{1}{2}\|X \beta-y\|^2+\lambda \|\beta\|_1-q(\tilde{\beta})-\langle \tilde{v},\beta-\tilde{\beta} \rangle+\frac{\sigma}{2}\|\beta-\tilde{\beta}\|^2.
\end{equation}
Obviously, the above model is a convex problem with $\ell_1$ penalty, which can be effectively solved by PDASC method. Next, we summarize the iterative framework of PMM algorithm in Algorithm \ref{algorithm1}.

\begin{framed}\label{algorithm1}
\noindent
{\bf PMM algorithm}
\vskip 1.0mm \hrule \vskip 1mm
\noindent
\textbf{Step 1.} Take $\sigma^1>0$, $\sigma^{2,0}>0$. Compute
\begin{align}\label{step1}
\beta^0 = \argmin_{\beta\in \mathbb{R}^p}\{J(\beta;\sigma^1,\textbf{0},\textbf{0})\}
\end{align}
by PDASC method such that the corresponding KKT residual satisfies a prescribed termination criterion.
For $k=0,1,2,\ldots$, do the following operations iteratively.\\
\textbf{Step 2.} Compute
\begin{align*}
\beta^{k+1}=\argmin_{\beta\in \mathbb{R}^p}\{J(\beta;\sigma^{2,k},\beta^k,\nabla q(\beta^k))+\langle \delta^k, \beta-\beta^k\rangle\}
\end{align*}
by PDASC method such that the error vector $\delta^k$ satisfies
\begin{align}\label{err1}
\|\delta^k\|\leq \frac{\sigma^{2,k}}{4}\|\beta^{k+1}-\beta^{k}\|.
\end{align}
\textbf{Step 3.} Check the prescribed stopping condition, if stop, denote the last iteration by $\hat{\beta}$. Else, update $\sigma^{2, k+1}=\gamma \sigma^{2,k}$ with $\gamma\in (0,1)$ and set $k:=k+1$.
\end{framed}
\begin{remark}
It should be pointed out that, we do not need to calculate the dual problem of the corresponding subproblem. Because the sub-problem here is essentially a convex problem with $\ell_1$ penalty, which can be directly and effectively solved by the PDASC method. This part is different from \cite{TWST2020}.
\end{remark}
\begin{remark}
Through the verification of many experiments and the communication with the authors in \cite{TWST2020}, we found that if we use PDASC to solve $\min\limits_{\beta\in \mathbb{R}^p}\{J(\beta;\sigma^{2,k},\beta^k,\nabla q(\beta^k))\}$ in the second stage so that the corresponding KKT residual satisfies a prescribed  accuracy, such as $1e-6$, then the condition (\ref{err1}) is automatically contented. Therefore, in our subsequent numerical experiments, the termination conditions of all sub-problems are set as the corresponding KKT residuals are sufficiently small. And the inequality (\ref{err1}) is mainly used for theoretical analysis.
\end{remark}

\subsection{The PDASC method for sub-problems}\label{BSSN}
From \cite{FJL2014}, we can see that the design idea of PDASC method is inspired by the first order optimality system of (\ref{linemode2}), which can be seen in the following Lemma \ref{lemma1}.

\begin{lemma}\label{lemma1}
$\beta^*\in \mathbb{R}^{p}$ is a global minimizer of (\ref{linemode2}) if and only if there exists a $d^*\in \mathbb{R}^p$ such that the following KKT system holds:
\begin{align}\label{KKT2}
(X^{\top}X&+\sigma I)\beta^*+d^*=X^{\top}y+\tilde{v}+\sigma \tilde{\beta},\\
&\beta^*=S_{\lambda}\Big(\beta^*+d^*\Big).
\end{align}
\end{lemma}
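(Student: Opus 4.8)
The plan is to identify the displayed KKT system in (\ref{KKT2}) with the Fermat optimality condition $\mathbf 0\in\partial J(\beta^*;\sigma,\tilde\beta,\tilde v)$ and then rewrite the subdifferential inclusion coming from the $\ell_1$ term as the soft-thresholding fixed-point equation $\beta^*=S_\lambda(\beta^*+d^*)$. First I would note that, for fixed $\sigma>0$, $\tilde\beta\in\mathbb R^p$ and $\tilde v\in\mathbb R^p$, the objective $\beta\mapsto J(\beta;\sigma,\tilde\beta,\tilde v)$ is the sum of the smooth convex quadratic $\frac12\|X\beta-y\|^2+\frac{\sigma}{2}\|\beta-\tilde\beta\|^2$, the affine function $-q(\tilde\beta)-\langle\tilde v,\beta-\tilde\beta\rangle$, and the convex nonsmooth term $\lambda\|\beta\|_1$; hence $J(\cdot;\sigma,\tilde\beta,\tilde v)\in\mathcal L(\mathbb R^p)$, and by Fermat's rule (\ref{cfr}) the vector $\beta^*$ is a global minimizer of (\ref{linemode2}) if and only if $\mathbf 0\in\partial J(\beta^*;\sigma,\tilde\beta,\tilde v)$.

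Next I would decompose $J=g+\lambda\|\cdot\|_1$ with $g(\beta):=\frac12\|X\beta-y\|^2-q(\tilde\beta)-\langle\tilde v,\beta-\tilde\beta\rangle+\frac{\sigma}{2}\|\beta-\tilde\beta\|^2$, which is finite-valued and differentiable on all of $\mathbb R^p$, with
\[
\nabla g(\beta)=X^{\top}(X\beta-y)-\tilde v+\sigma(\beta-\tilde\beta)=(X^{\top}X+\sigma I)\beta-(X^{\top}y+\tilde v+\sigma\tilde\beta).
\]
Because $\dom g=\mathbb R^p$, the Moreau--Rockafellar sum rule for subdifferentials holds with no qualification, so $\partial J(\beta^*)=\nabla g(\beta^*)+\lambda\,\partial\|\cdot\|_1(\beta^*)$. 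Consequently $\mathbf 0\in\partial J(\beta^*)$ is equivalent to the existence of a vector $d^*$ with $d^*\in\lambda\,\partial\|\cdot\|_1(\beta^*)$ and $\nabla g(\beta^*)+d^*=\mathbf 0$; substituting the expression for $\nabla g(\beta^*)$, this last equality is exactly the first equation of (\ref{KKT2}), namely $(X^{\top}X+\sigma I)\beta^*+d^*=X^{\top}y+\tilde v+\sigma\tilde\beta$.

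It then remains to convert the inclusion $d^*\in\lambda\,\partial\|\cdot\|_1(\beta^*)=\partial(\lambda\|\cdot\|_1)(\beta^*)$ into the second equation of (\ref{KKT2}). Here I would apply the proximal characterization (\ref{pri1}) with $f=\lambda\|\cdot\|_1$, which gives $d^*\in\partial(\lambda\|\cdot\|_1)(\beta^*)\Leftrightarrow\beta^*=\prox_{\lambda\|\cdot\|_1}(\beta^*+d^*)$, and then (\ref{pri2})--(\ref{soth}) identify $\prox_{\lambda\|\cdot\|_1}$ with the soft-thresholding operator $S_\lambda$, yielding $\beta^*=S_\lambda(\beta^*+d^*)$. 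Reading the whole chain of equivalences in both directions proves the lemma.

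The argument is largely bookkeeping, so I do not expect a genuine obstacle; the points that need a little care are (i) checking convexity of $J(\cdot;\sigma,\tilde\beta,\tilde v)$, since this is what makes Fermat's rule an equivalence rather than merely a sufficient condition; (ii) invoking the sum rule, which is justified here only because $g$ is everywhere finite-valued; and (iii) keeping the sign of $d^*$ straight, so that $d^*$ is a scaled subgradient of $\|\cdot\|_1$ at $\beta^*$ and not its negative. One might also remark that $X^{\top}X+\sigma I\succ 0$ renders $J(\cdot;\sigma,\tilde\beta,\tilde v)$ strongly convex, so that the minimizer and the associated pair $(\beta^*,d^*)$ are in fact unique, although this is not required for the stated characterization.
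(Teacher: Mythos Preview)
Your proposal is correct and follows essentially the same approach as the paper: apply Fermat's rule to obtain $\mathbf 0\in\partial J(\beta^*;\sigma,\tilde\beta,\tilde v)$, compute the subdifferential as $\nabla g(\beta^*)+\lambda\,\partial\|\cdot\|_1(\beta^*)$ to extract the first KKT equation, and then use the equivalence (\ref{pri1})--(\ref{pri2}) to rewrite the inclusion $d^*\in\lambda\,\partial\|\cdot\|_1(\beta^*)$ as the soft-thresholding identity. Your version is slightly more explicit about convexity of $J$ and the validity of the sum rule, and your closing remark on strong convexity and uniqueness is a nice addition, but the logical skeleton matches the paper's proof.
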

\begin{proof}
By (\ref{cfr}), we can have
\begin{align*}
\beta^*\in \mathbb{R}^p \ is \ a \ minimizer \ of \ (\ref{linemode2}) \Leftrightarrow \textbf{0} \in \partial J(\beta^*;\sigma,\tilde{\beta},\tilde{v}).
\end{align*}
Obviously, $\partial J(\beta^*;\sigma,\tilde{\beta},\tilde{v})=(X^{\top}X+\sigma I)\beta^*-X^{\top}y-\tilde{v}-\sigma \tilde{\beta}+\lambda \partial \|\cdot\|_1(\beta^*)$.
Therefore, there exists $d^* \in \lambda \partial \|\cdot\|_1(\beta^*)$ such that
\begin{align}\label{ome1}
(X^{\top}X+\sigma I)\beta^*+d^*=X^{\top}y+\tilde{v}+\sigma \tilde{\beta}.
\end{align}
In addition, (\ref{pri1}) and (\ref{pri2}) imply
$$d^* \in \lambda \partial \|\cdot\|_1(\beta^*) \Leftrightarrow \beta^*=Prox_{\lambda \|\cdot\|_1}(\beta^*+d^*)=S_{\lambda}(\beta^*+d^*).$$
\end{proof}

Based on Lemma \ref{lemma1}, we can directly apply the PDASC method to solve problem (\ref{linemode2}), which is exhibited in Algorithm \ref{algorithm2}.
\begin{framed}\label{algorithm2}
\noindent
{\bf PDASC method with $(\sigma,\tilde{\beta},\tilde{v})\in \mathbb{R}_{++}\times\mathbb{R}^{p}\times \mathbb{R}^p$}
\vskip 1.0mm \hrule \vskip 1mm
\noindent
\begin{itemize}
\item[Step 0.] Given $\lambda_0\geq\|X^{\top}y\|_{\infty}$, the active set $\A(\lambda_0)=\emptyset$, $\beta(\lambda_0)=\textbf{0}$, $d(\lambda_0)=X^{\top}y$, $\mu\in (0,1)$, $K_{max}\in \mathbb{N}$.
For $j=0,1,\ldots$, do the following operations iteratively.
\item[Step 1.] Let $\lambda_j=\mu \lambda_{j-1}$, $\A_0=\A(\lambda_{j-1})$, $(\beta^0,d^0)=(\beta(\lambda_{j-1}),d(\lambda_{j-1}))$. For $k=1,2,\ldots,K_{max}$, do the following operations iteratively.
\begin{itemize}
\item[Step 1.1.] Compute the active and inactive sets $\A_k$ and $\I_k$:
\begin{align}\label{acti}
\A_k^{+}&=\{i\in [p]: \beta_i^{k-1}+d_{i}^{k-1}>\lambda\},\notag\\
\A_k^{-}&=\{i\in [p]: \beta_i^{k-1}+d_{i}^{k-1}<-\lambda\},\\
\A_{k}&=\A_k^{+} \cup \A_k^{-}, \quad \I_k=\A_{k}^{c}.\notag
\end{align}
\item[Step 1.2.] Check stopping criterion $\A_k=\A_{k-1}$.
\item[Step 1.3.] Update the primal and dual variables $\beta^k$ and $d^k$ respectively by
\begin{align}\label{pd}
\beta_{\I_k}^k&=\textbf{0}_{\I_k}, \quad d_{\A_k}^k=\lambda [\textbf{1}_{\A_k^{+}};-\textbf{1}_{\A_k^{-}}],\notag\\
\beta_{\A_k}^k&=(X_{\A_k}^{\top}X_{\A_k}+\sigma I_{\A_k})^{-1}(X_{\A_k}^{\top}y+\tilde{v}_{\A_k}+\sigma \tilde{\beta}_{\A_k}-d_{\A_k}^k),\\
d_{\I_k}^k&=X_{\I_k}^{\top}y+\tilde{v}_{\I_k}+\sigma \tilde{\beta}_{\I_k}-X_{\I_k}^{\top}X_{\A_k}\beta_{\A_k}^k.\notag
\end{align}
\end{itemize}
\item[Step 2.] Set $\tilde{k}=\min(K_{max}, k)$, $\A(\lambda_j)=\{i\in [p]: \beta_i^{\tilde{k}}+d_i^{\tilde{k}}>\lambda\}\cup \{i\in [p]: \beta_i^{\tilde{k}}+d_i^{\tilde{k}}<-\lambda\}$ and $(\beta(\lambda_j),d(\lambda_j))=(\beta^{\tilde{k}},d^{\tilde{k}})$.
\item[Step 3.] Check stop condition, if stop, employ the high-dimensional Bayesian information criterion (HBIC) to choose the optimal regularization parameter $\hat{\lambda}$ and denote the corresponding $\beta(\hat{\lambda})$ by $\hat{\beta}$. Else, $j:=j+1$.
\end{itemize}
\end{framed}

\begin{remark}
For the step 10 in Algorithm \ref{algorithm2}, the high-dimensional Bayesian information criterion (HBIC) \cite{WKL2013} chooses the optimal $\hat{\lambda}$ by
\begin{align*}
\hat{\lambda}=\argmin_{\lambda\in [\lambda_{min}, \lambda_{max}]}\Big\{HBIC(\lambda):=log\big(\frac{1}{n}\|X\beta(\lambda)-y\|^2\big)+\frac{log(log(n))log(p)}{n}\|\beta(\lambda)\|_0\Big\},
\end{align*}
\end{remark}
where $\lambda_{min}$ and $\lambda_{max}$ will be specified in numerical tests.

\section{Convergence analysis}\label{the}
We firstly describe the convergence result of the algorithm in our first stage. Since $J(\beta;\sigma^1,\textbf{0},\textbf{0})$ is bounded below, we can get the following result from \cite[Proposition 4.19]{F2011} and \cite[Theorem 4.2]{TWST2020}.
\begin{theorem}
Let $\bar{J}(\sigma^1):=\min\limits_{\beta\in\mathbb{R}^{p}}\{J(\beta;\sigma^1,\textbf{0},\textbf{0})\}$. Then we have
\begin{align*}
\lim_{\sigma^1\rightarrow 0}\bar{J}(\sigma^1)=\min_{\beta\in\mathbb{R}^{p}}\Big\{\frac{1}{2}\|X\beta-y\|^2+\lambda \|\beta\|_1\Big\}.
\end{align*}
\end{theorem}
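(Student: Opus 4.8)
The plan is to reduce the statement to a short squeeze argument for a Tikhonov-type perturbation of a fixed convex program. First I would note that $q_{mcp}(0;\lambda,\tau)=q_{scad}(0;\lambda,\tau)=0$, so $q(\textbf{0})=0$, and that with $\tilde{\beta}=\tilde{v}=\textbf{0}$ the subproblem objective collapses to
\[
J(\beta;\sigma^1,\textbf{0},\textbf{0})=g(\beta)+\frac{\sigma^1}{2}\|\beta\|^2,\qquad g(\beta):=\frac12\|X\beta-y\|^2+\lambda\|\beta\|_1 .
\]
Set $\phi_\sigma:=g+\frac{\sigma}{2}\|\cdot\|^2$ and $m:=\min_{\beta\in\mathbb{R}^p}g(\beta)$; the claim is then exactly $\lim_{\sigma^1\to 0}\min_{\beta}\phi_{\sigma^1}(\beta)=m$.

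Next I would check that all the minima involved are attained. Because the $\lambda\|\beta\|_1$ term is coercive, $g$ is a proper, continuous, coercive (hence level-bounded) convex function, so $m$ is finite and attained at some $\bar{\beta}$; the same holds for every $\phi_\sigma$, so $\bar{J}(\sigma^1)$ is well defined and finite for each $\sigma^1>0$. With this in hand the two-sided estimate is immediate: since $\phi_\sigma\ge g$ pointwise we have $\bar{J}(\sigma^1)\ge m$ for all $\sigma^1>0$, while evaluating $\phi_{\sigma^1}$ at $\bar{\beta}$ gives $\bar{J}(\sigma^1)\le\phi_{\sigma^1}(\bar{\beta})=m+\frac{\sigma^1}{2}\|\bar{\beta}\|^2$. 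Moreover $\sigma^1\mapsto\bar{J}(\sigma^1)$ is nondecreasing, since $\sigma\mapsto\phi_\sigma(\beta)$ is nondecreasing for each fixed $\beta$; hence the limit as $\sigma^1\downarrow 0$ exists, and letting $\sigma^1\to 0$ in $m\le\bar{J}(\sigma^1)\le m+\frac{\sigma^1}{2}\|\bar{\beta}\|^2$ squeezes it to $m$. Equivalently, this is the epi-convergence/stability statement for the family $\min_{\beta}\{g(\beta)+\frac{\sigma}{2}\|\beta-\tilde{\beta}\|^2\}$ at $\tilde{\beta}=\textbf{0}$ invoked from \cite[Proposition 4.19]{F2011} and \cite[Theorem 4.2]{TWST2020}.

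I do not expect a genuine obstacle here. The only place that needs a word of care is the existence of a minimizer $\bar{\beta}$ of $g$ in the high-dimensional regime $p>n$: when $X$ has a nontrivial kernel the quadratic loss alone is not coercive, so one must lean on the $\ell_1$ term to obtain level-boundedness of $g$ and thus a finite $m$ attained at a concrete $\bar{\beta}$ — which is precisely what makes the upper bound $m+\frac{\sigma^1}{2}\|\bar{\beta}\|^2$ meaningful. If one instead prefers the abstract route through the cited propositions, their (mild) hypotheses have to be checked in place of this elementary coercivity argument.
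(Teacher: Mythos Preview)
Your proposal is correct and essentially the same as the paper's proof: both obtain the lower bound from $\frac{\sigma^1}{2}\|\beta\|^2\ge 0$ and the upper bound by evaluating $J(\cdot;\sigma^1,\textbf{0},\textbf{0})$ at a fixed $\beta$, then squeeze as $\sigma^1\to 0$. The only cosmetic difference is that the paper uses an arbitrary $\beta$ for the upper bound (then takes the infimum) rather than a specific minimizer $\bar{\beta}$, and does not explicitly discuss existence of the limit or of minimizers; your version is simply a bit more careful on those points.
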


\begin{proof}
For any $\sigma^1>0$ and $\beta\in\mathbb{R}^{p}$, we have
$$\bar{J}(\sigma^1)\leq \frac{1}{2}\|X\beta-y\|^2+\lambda \|\beta\|_1+\frac{\sigma^1}{2}\|\beta\|^2.$$
Therefore, $\lim_{\sigma^1\rightarrow 0}\bar{J}(\sigma^1)\leq \frac{1}{2}\|X\beta-y\|^2+\lambda \|\beta\|_1$. Combining with the arbitrariness of $\beta$, we can get
$$\lim_{\sigma^1\rightarrow 0}\bar{J}(\sigma^1)\leq\min_{\beta\in\mathbb{R}^{p}}\Big\{\frac{1}{2}\|X\beta-y\|^2+\lambda \|\beta\|_1\Big\}.$$
In addition, since $\frac{\sigma^1}{2}\|\beta\|^2\geq 0$ for any $\beta\in\mathbb{R}^{p}$, so
$$\bar{J}(\sigma^1)\geq\min_{\beta\in\mathbb{R}^{p}}\Big\{\frac{1}{2}\|X\beta-y\|^2+\lambda \|\beta\|_1\Big\},$$
and then
$$\lim_{\sigma^1\rightarrow 0}\bar{J}(\sigma^1)\geq\min_{\beta\in\mathbb{R}^{p}}\Big\{\frac{1}{2}\|X\beta-y\|^2+\lambda \|\beta\|_1\Big\}.$$
Hence, we can get the desired result.
\end{proof}

Then, we will analyze the convergence of PMM algorithm. Denote
$$J_k(\beta):=J(\beta;\sigma^{2,k},\beta^k,\nabla q(\beta^k)).$$
At the $k$-th iteration of stage II, we have that
\begin{align}\label{Jk1}
\beta^{k+1}=\argmin_{\beta\in \mathbb{R}^{p}}\{J_k(\beta)+\langle \delta^k, \beta-\beta^k\rangle\}
\end{align}
such that condition (\ref{err1}) is satisfied. The following lemma shows the descent property of the
function $J_k$.

\begin{lemma}\label{le41}
Let $\beta^{k+1}$ be an approximate solution of the subproblem in the $k$-th iteration such
that (\ref{err1}) holds. Then we have
$$J_{k}(\beta^k)\geq J_{k}(\beta^{k+1})-\frac{\sigma^{2,k}}{4}\|\beta^{k+1}-\beta^{k}\|^2.$$
\end{lemma}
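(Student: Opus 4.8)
The plan is to combine the convexity of $J_k$ with the first-order optimality condition satisfied by the inexact minimizer $\beta^{k+1}$. First I would note that $J_k$, being (by the definition in (\ref{linemode2})) the sum of the convex smooth term $\frac{1}{2}\|X\beta-y\|^2$, the convex nonsmooth term $\lambda\|\beta\|_1$, the affine term $-q(\beta^k)-\langle\nabla q(\beta^k),\beta-\beta^k\rangle$, and the convex quadratic $\frac{\sigma^{2,k}}{2}\|\beta-\beta^k\|^2$, belongs to $\mathcal{L}(\mathbb{R}^p)$; hence the subgradient inequality is available for it.

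Next, applying Fermat's rule (\ref{cfr}) to the proper lower semicontinuous convex function $\beta\mapsto J_k(\beta)+\langle\delta^k,\beta-\beta^k\rangle$ and using the characterization (\ref{Jk1}) of $\beta^{k+1}$, I would obtain $\textbf{0}\in\partial\big(J_k(\cdot)+\langle\delta^k,\cdot-\beta^k\rangle\big)(\beta^{k+1})$. Since the added term $\langle\delta^k,\cdot-\beta^k\rangle$ is affine, the subdifferential sum rule is trivial here and yields $-\delta^k\in\partial J_k(\beta^{k+1})$. Writing the subgradient inequality for $J_k$ at $\beta^{k+1}$ with this subgradient and evaluating it at $\beta^k$ gives
$$J_k(\beta^k)\ \ge\ J_k(\beta^{k+1})+\langle -\delta^k,\ \beta^k-\beta^{k+1}\rangle\ =\ J_k(\beta^{k+1})+\langle \delta^k,\ \beta^{k+1}-\beta^k\rangle .$$
Then by the Cauchy-Schwarz inequality together with the inexactness bound (\ref{err1}), $\langle \delta^k,\beta^{k+1}-\beta^k\rangle\ge-\|\delta^k\|\,\|\beta^{k+1}-\beta^k\|\ge-\frac{\sigma^{2,k}}{4}\|\beta^{k+1}-\beta^k\|^2$, and substituting this into the previous display produces exactly the asserted inequality.

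There is no genuine obstacle in this argument; the only spot deserving a word of care is the sum rule for subdifferentials in the second step, and it is immediate because the sole nonsmooth summand is the $\ell_1$ term while everything else is continuously differentiable. I would also remark in passing that, since $J_k$ is in fact $\sigma^{2,k}$-strongly convex (the proximal term contributes modulus $\sigma^{2,k}$ and $X^{\top}X\succeq 0$), the same steps with the strong-convexity version of the subgradient inequality would give the sharper bound $J_k(\beta^k)\ge J_k(\beta^{k+1})+\frac{\sigma^{2,k}}{4}\|\beta^{k+1}-\beta^k\|^2$; but the weaker inequality in the statement, with the minus sign, is all that the subsequent convergence analysis needs, so I would keep the proof at that level.
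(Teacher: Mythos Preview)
Your proof is correct and follows essentially the same route as the paper's: derive $-\delta^k\in\partial J_k(\beta^{k+1})$ from the optimality of $\beta^{k+1}$ for $J_k(\cdot)+\langle\delta^k,\cdot-\beta^k\rangle$, apply the subgradient inequality for the convex function $J_k$ at $\beta^{k+1}$ evaluated at $\beta^k$, and then use Cauchy--Schwarz together with (\ref{err1}). The paper's argument is terser (it does not spell out the sum rule or the strong-convexity remark), but the logic is identical.
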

\begin{proof}
From the convexity of function $J_{k}$, we have $J_{k}(\beta^k) -J_{k}(\beta^{k+1})\geq \langle \partial J_{k}(\beta^{k+1}) , \beta^{k}-\beta^{k+1}\rangle$. In addition, we can get $-\delta^k \in \partial J_{k}(\beta^{k+1})$ from (\ref{Jk1}). Therefore, we obtain
$$J_{k}(\beta^k) -J_{k}(\beta^{k+1})\geq \langle \delta^k , \beta^{k+1}-\beta^{k}\rangle\geq -\|\delta^k\|\cdot \|\beta^{k+1}-\beta^{k}\|.$$
Combining with condition (\ref{err1}), it is easy to get the desired result
$$J_{k}(\beta^k)\geq J_{k}(\beta^{k+1})-\frac{\sigma^{2,k}}{4}\|\beta^{k+1}-\beta^{k}\|^2.$$
\end{proof}

Next we recall the equivalent expression of a d-stationary point of (\ref{dcmodel}) in the following lemma, which is similar to that in \cite{CPS2018,PRA2016,TWST2020}.

\begin{lemma}\label{d1}
The vector $\bar{\beta}\in \mathbb{R}^{p}$ is a d-stationary point of (\ref{dcmodel}) if and only if there exist $\sigma>0$ such that
$$\bar{\beta}\in \argmin_{\beta\in \mathbb{R}^{p}}\{J(\beta;\sigma,\bar{\beta},\nabla q(\bar{\beta}))\}.$$
\end{lemma}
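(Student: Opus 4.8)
The plan is to show that both conditions in Lemma \ref{d1} are equivalent to the single inclusion
$$\nabla q(\bar\beta) - X^\top(X\bar\beta - y) \;\in\; \lambda\,\partial\|\cdot\|_1(\bar\beta),$$
which does not involve $\sigma$; in particular the existential "there exists $\sigma>0$" may even be strengthened to "for every $\sigma>0$."

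First, for the \emph{argmin} side: for any fixed $\sigma>0$, the map $\beta\mapsto J(\beta;\sigma,\bar\beta,\nabla q(\bar\beta))$ is a proper lower semicontinuous convex function, being the sum of the convex quadratic $\frac12\|X\beta-y\|^2$, the convex term $\lambda\|\beta\|_1$, an affine term, and the convex quadratic $\frac{\sigma}{2}\|\beta-\bar\beta\|^2$. Hence Fermat's rule (\ref{cfr}) applies: $\bar\beta$ is a global minimizer iff $\textbf{0}\in\partial_\beta J(\bar\beta;\sigma,\bar\beta,\nabla q(\bar\beta))$. Since all the smooth summands are finite-valued on $\mathbb{R}^p$, the subdifferential sum rule gives
$$\partial_\beta J(\bar\beta;\sigma,\bar\beta,\nabla q(\bar\beta)) = X^\top(X\bar\beta-y) - \nabla q(\bar\beta) + \sigma(\bar\beta-\bar\beta) + \lambda\,\partial\|\cdot\|_1(\bar\beta) = X^\top(X\bar\beta-y) - \nabla q(\bar\beta) + \lambda\,\partial\|\cdot\|_1(\bar\beta),$$
because the proximal term contributes $\sigma(\bar\beta-\bar\beta)=\textbf{0}$. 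Thus the minimizer condition is exactly the displayed inclusion, and it holds for one $\sigma>0$ iff it holds for all.

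Second, for the d-stationarity side: write $f=P-q$ with $P(\beta):=\frac12\|X\beta-y\|^2+\lambda\|\beta\|_1$ convex and $q$ convex and continuously differentiable, so $f$ is locally Lipschitz near $\bar\beta$ and directionally differentiable with $f'(\bar\beta;h)=P'(\bar\beta;h)-\langle\nabla q(\bar\beta),h\rangle$ for all $h\in\mathbb{R}^p$. Using that the directional derivative of a convex function is the support function of its subdifferential, $P'(\bar\beta;h)=\max_{z\in\partial P(\bar\beta)}\langle z,h\rangle$ with $\partial P(\bar\beta)=X^\top(X\bar\beta-y)+\lambda\,\partial\|\cdot\|_1(\bar\beta)$ (the $\ell_1$ part being given by (\ref{sub1})). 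Hence $\bar\beta$ is d-stationary iff $\langle\nabla q(\bar\beta),h\rangle\le\max_{z\in\partial P(\bar\beta)}\langle z,h\rangle$ for all $h$, and since $\partial P(\bar\beta)$ is a nonempty closed convex set, a standard separating-hyperplane argument shows this is equivalent to $\nabla q(\bar\beta)\in\partial P(\bar\beta)$, i.e., to the same displayed inclusion. Combining the two parts proves the lemma.

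The only genuinely non-routine point is the separation step: that the family of inequalities $\langle v,h\rangle\le\sigma_C(h)$, $h\in\mathbb{R}^p$, characterizes membership $v\in C$ for closed convex $C$. This is the classical biconjugation fact and can be invoked directly; everything else (convexity of $J$, the sum rule, the vanishing of the proximal gradient at $\bar\beta$, and directional differentiability of $\|\cdot\|_1$) is routine. A slightly shorter alternative would be to skip the explicit support-function computation and instead quote the fact recalled in Section \ref{pre}, that for $f$ locally Lipschitz and directionally differentiable d-stationarity is equivalent to $\textbf{0}\in\partial f(\bar\beta)$, and then match this with the Fermat condition obtained above; I would still spell out the $\ell_1$ subdifferential so as to keep the argument self-contained.
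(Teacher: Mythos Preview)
Your argument is correct. The paper itself omits the proof of this lemma entirely, pointing instead to \cite[Lemma 4.2]{TWST2020}, so there is no in-paper proof to compare against; your route---reducing both conditions to the single inclusion $\nabla q(\bar\beta)-X^\top(X\bar\beta-y)\in\lambda\,\partial\|\cdot\|_1(\bar\beta)$ via Fermat's rule on one side and the support-function characterization of the convex directional derivative on the other---is the standard DC-stationarity argument and is presumably what the cited reference does as well. Your observation that the condition is independent of $\sigma$ (so ``there exists $\sigma>0$'' can be strengthened to ``for every $\sigma>0$'') is also correct and worth noting.
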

\begin{proof}
The proof is similar to \cite[Lemma 4.2]{TWST2020}, so it is omitted here.
\end{proof}

Now we present the main result of this section on the subsequential convergence of $\{\beta^k\}$ to a d-stationary point of (\ref{dcmodel}).

\begin{theorem}\label{th1}
Assume $\{\sigma^{2,k}\}$ is a convergent sequence. Let $\{\beta^k\}$ be the sequence generated by
the PMM algorithm. The following two statements hold.\\
1. The function sequence $\{f(\beta^k)\}$ is convergent, and $\lim_{k\rightarrow \infty}\|\beta^{k+1}-\beta^{k}\|=0$.\\
2. Every accumulation point of the sequence $\{\beta^k\}$, if exists, is a d-stationary point of (\ref{dcmodel}).
\end{theorem}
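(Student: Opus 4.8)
The plan is to exploit the descent lemma (Lemma \ref{le41}) together with the strong convexity of the proximal terms to build a summable-gap argument, and then to identify accumulation points via Lemma \ref{d1}. First I would establish a genuine sufficient-decrease inequality for the \emph{original} objective $f$. Recall that $J_k(\beta) = \frac{1}{2}\|X\beta - y\|^2 + \lambda\|\beta\|_1 - q(\beta^k) - \langle \nabla q(\beta^k), \beta - \beta^k\rangle + \frac{\sigma^{2,k}}{2}\|\beta - \beta^k\|^2$. Since $q$ is convex and smooth, the linearization $q(\beta^k) + \langle \nabla q(\beta^k), \beta - \beta^k\rangle \le q(\beta)$ gives $J_k(\beta) \ge f(\beta) + \frac{\sigma^{2,k}}{2}\|\beta - \beta^k\|^2$, while at $\beta = \beta^k$ we have exactly $J_k(\beta^k) = f(\beta^k)$. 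Combining these with Lemma \ref{le41},
\begin{align*}
f(\beta^k) = J_k(\beta^k) \ge J_k(\beta^{k+1}) - \frac{\sigma^{2,k}}{4}\|\beta^{k+1} - \beta^k\|^2 \ge f(\beta^{k+1}) + \frac{\sigma^{2,k}}{4}\|\beta^{k+1} - \beta^k\|^2.
\end{align*}
Since $f$ is bounded below (the quadratic loss plus $\ell_1$ term dominates the bounded-below $-q$; more carefully, $\lambda\|\beta\|_1 - q(\beta) \ge 0$ from the DC decomposition with $\rho \ge 0$), the sequence $\{f(\beta^k)\}$ is nonincreasing and bounded below, hence convergent, which is statement 1's first claim.

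Next I would derive $\|\beta^{k+1} - \beta^k\| \to 0$. Summing the sufficient-decrease inequality over $k$ telescopes the left side to $f(\beta^0) - \lim f(\beta^k) < \infty$, giving $\sum_k \sigma^{2,k}\|\beta^{k+1} - \beta^k\|^2 < \infty$. Here is where the hypothesis that $\{\sigma^{2,k}\}$ is convergent enters: in the stated algorithm $\sigma^{2,k} = \gamma^k \sigma^{2,0} \to 0$, so "convergent" must mean convergent to a strictly positive limit (otherwise the proximal weights vanish and no conclusion on the iterates follows) — I would note this and assume $\sigma^{2,k} \to \sigma_\infty > 0$, so that $\sigma^{2,k} \ge c > 0$ for all $k$; then $c\sum_k \|\beta^{k+1} - \beta^k\|^2 < \infty$ forces $\|\beta^{k+1} - \beta^k\| \to 0$.

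For statement 2, let $\bar\beta$ be an accumulation point, say $\beta^{k_j} \to \bar\beta$. From $\|\beta^{k+1} - \beta^k\| \to 0$ we also get $\beta^{k_j + 1} \to \bar\beta$. The optimality condition for \eqref{Jk1} is $0 \in \partial J_{k_j}(\beta^{k_j+1}) + \delta^{k_j}$, i.e.
\begin{align*}
0 \in (X^\top X + \sigma^{2,k_j} I)\beta^{k_j+1} - X^\top y - \nabla q(\beta^{k_j}) - \sigma^{2,k_j}\beta^{k_j} + \lambda \partial\|\cdot\|_1(\beta^{k_j+1}) + \delta^{k_j}.
\end{align*}
Since \eqref{err1} gives $\|\delta^{k_j}\| \le \frac{\sigma^{2,k_j}}{4}\|\beta^{k_j+1} - \beta^{k_j}\| \to 0$, and $\nabla q$ is continuous while $\partial\|\cdot\|_1$ has closed graph, passing to the limit along $k_j$ yields $0 \in (X^\top X + \sigma_\infty I)\bar\beta - X^\top y - \nabla q(\bar\beta) - \sigma_\infty \bar\beta + \lambda\partial\|\cdot\|_1(\bar\beta) = \partial J(\bar\beta; \sigma_\infty, \bar\beta, \nabla q(\bar\beta))$. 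By Fermat's rule \eqref{cfr} this means $\bar\beta \in \argmin_\beta J(\beta; \sigma_\infty, \bar\beta, \nabla q(\bar\beta))$, and Lemma \ref{d1} then identifies $\bar\beta$ as a d-stationary point of \eqref{dcmodel}. The main obstacle I anticipate is the graph-closedness / limiting step for the subdifferential inclusion: one must ensure the soft-thresholding residual $d^{k_j+1} := X^\top y + \nabla q(\beta^{k_j}) + \sigma^{2,k_j}\beta^{k_j} - (X^\top X + \sigma^{2,k_j}I)\beta^{k_j+1} - \delta^{k_j}$ stays in $\lambda\partial\|\cdot\|_1(\beta^{k_j+1})$ and that its limit lies in $\lambda\partial\|\cdot\|_1(\bar\beta)$; this is exactly the outer-semicontinuity of $\partial\|\cdot\|_1$, which holds because $\|\cdot\|_1$ is convex, so it is routine but must be invoked carefully, together with the clarification about $\sigma^{2,k} \to \sigma_\infty > 0$.
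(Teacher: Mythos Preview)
Your Part~1 argument coincides with the paper's: both combine Lemma~\ref{le41} with the convexity of $q$ to obtain $f(\beta^k)\ge f(\beta^{k+1})+\tfrac{\sigma^{2,k}}{4}\|\beta^{k+1}-\beta^k\|^2$, then use that $f$ is bounded below. You are also right to flag the role of $\sigma^{2,k}$: the paper simply asserts that $\|\beta^{k+1}-\beta^k\|\to 0$ follows, tacitly relying on $\sigma^{2,k}$ being bounded away from zero, just as you make explicit.

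For Part~2 you take a genuinely different route. The paper never writes out the subdifferential inclusion or appeals to the closed-graph property of $\partial\|\cdot\|_1$. Instead it argues at the level of function values: since $\beta^{k+1}$ minimizes $J_k(\cdot)+\langle\delta^k,\cdot-\beta^k\rangle$, one has for every fixed $\beta$
\[
J_k(\beta)\;\ge\;J_k(\beta^{k+1})+\langle\delta^k,\beta^{k+1}-\beta\rangle\;\ge\;J_k(\beta^{k+1})-\|\delta^k\|\,\|\beta^{k+1}-\beta\|,
\]
and then lets $k\,(\in K_0)\to\infty$ directly (using $\|\delta^k\|\to 0$ from \eqref{err1} and the continuity of every term appearing in $J_k$) to obtain $J(\beta;\sigma^{2,\infty},\beta^\infty,\nabla q(\beta^\infty))\ge J(\beta^\infty;\sigma^{2,\infty},\beta^\infty,\nabla q(\beta^\infty))$ for all $\beta$, whence Lemma~\ref{d1} applies. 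This sidesteps the outer-semicontinuity step entirely by passing limits through continuous real-valued functions rather than through a set-valued map; your approach, by contrast, makes the limiting first-order condition explicit. Both are valid; the paper's version is a touch shorter, yours is more transparent about which stationarity condition is actually being verified.
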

\begin{proof}
1. We can easily get $f(\beta^k)=J_{k}(\beta^k)$. Then from Lemma \ref{le41}, we have
\begin{align*}
f(\beta^k)&=J_{k}(\beta^k)\geq J_{k}(\beta^{k+1})-\frac{\sigma^{2,k}}{4}\|\beta^{k+1}-\beta^{k}\|^2\\
&=\frac{1}{2}\|X \beta^{k+1}-y\|^2+\lambda \|\beta^{k+1}\|_1-q(\beta^k)-\langle \nabla q(\beta^k) ,\beta^{k+1}-\beta^{k} \rangle+\frac{\sigma^{2,k}}{4}\|\beta^{k+1}-\beta^{k}\|^2\\
&=f(\beta^{k+1})+\frac{\sigma^{2,k}}{4}\|\beta^{k+1}-\beta^{k}\|^2+q(\beta^{k+1})-q(\beta^k)-\langle \nabla q(\beta^k) ,\beta^{k+1}-\beta^{k} \rangle\\
&\geq f(\beta^{k+1})+\frac{\sigma^{2,k}}{4}\|\beta^{k+1}-\beta^{k}\|^2.
\end{align*}
The last inequality is derived from the convexity of $q$. Therefore the sequence $\{f(\beta^k)\}$ is non-increasing.
Since $f(\beta)$ is bounded below, the sequence $\{f(\beta^k)\}$ converges, and then the sequence $\{\|\beta^{k+1}-\beta^{k}\|\}$ converges to zero.

2. Let $\beta^{\infty}$ be the limit of a convergent subsequence $\{\beta^k\}_{k\in K_0}$. We can easily prove that $\{\beta^{k+1}\}_{k\in K_0}$ also converges to $\beta^{\infty}$. From the definition of $\beta^{k+1}$, we can get
$$J_{k}(\beta)+\langle \delta^k , \beta-\beta^{k}\rangle\geq J_{k}(\beta^{k+1})+\langle \delta^k , \beta^{k+1}-\beta^{k}\rangle, \quad \forall \beta\in \mathbb{R}^{p}.$$
Further,
\begin{align*}
J_{k}(\beta)\geq J_{k}(\beta^{k+1})+\langle \delta^k , \beta^{k+1}-\beta\rangle
\geq J_{k}(\beta^{k+1})-\|\delta^k\|\cdot\|\beta^{k+1}-\beta\|, \quad \forall \beta\in \mathbb{R}^{p}.
\end{align*}
Letting $k (\in K_0)\rightarrow \infty$, we obtain that $J_{\infty}(\beta)\geq J_{\infty}(\beta^{\infty}), \forall \beta\in \mathbb{R}^{p}$. Equivalently,
$$J(\beta;\sigma^{2,\infty},\beta^{\infty},\nabla q(\beta^{\infty}))\geq J(\beta^{\infty};\sigma^{2,\infty},\beta^{\infty},\nabla q(\beta^{\infty})), \forall \beta\in \mathbb{R}^{p},$$
where $\sigma^{2,\infty}=\lim_{k\rightarrow \infty}\sigma^{2,k}\geq 0$. Then, we can conclude
$$\beta^{\infty}\in \argmin_{\beta\in \mathbb{R}^{p}}\{J(\beta;\sigma^{2,\infty},\beta^{\infty},\nabla q(\beta^{\infty}))\}.$$
From Lemma \ref{d1}, we can easily obtain the desired result.
\end{proof}

\section{Numerical Experiments}\label{num}
In this section, we will use multiple sets of simulated and real examples to illustrate the performance of the proposed PMM algorithm for non-convex penalized high-dimension linear regression problems. The specific layout is that we first use some examples to illustrate the behavior of PMM algorithm, and then highlight the effectiveness and comparability through numerical comparison with the latest SSN method in \cite{SHJ2018} and the classic CD algorithm in \cite{BH2011}. All the experiments are performed with Microsoft Windows 10 and MATLAB R2019a, and run on a PC with an Intel Core i7-9700 CPU at 3.00 GHz and 16 GB of memory.

\subsection{Experiments setting}\label{es1}
In the simulation experiments, we generate the $n\times p$ matrix $X$ whose rows are drawn independently from $\mathcal{N}(0,\Sigma)$ with $\Sigma_{ij}=\kappa^{|i-j|}, 1\leq i,j \leq p$, where $\kappa$ is the correlation coefficient of matrix $X$. In order to generate the target regression coefficient $\beta^*\in \mathbb{R}^{p}$, we randomly select a subset of $\{1,\cdots,p\}$ to form the active set $\A^*$ with $|\A^*|=K<n$. Let $R=m_2 / m_1$, where $m_2=\|\beta^*_{\A^*}\|_{max}$ and $m_1=\|\beta^*_{\A^*}\|_{min}=1$. Then the $K$ nonzero coefficients in $\beta^*$ are uniformly distributed in $[m_1,m_2]$. The response variable is generated by $y=X\beta^*+\varepsilon$ where $\varepsilon \in \mathbb{R}^{n}$ is the additive Gaussian noise and generated independently from $\mathcal{N}(0,\sigma_1^2 I_n)$.

To select the optimal regularization parameter, we set $\lambda_{max}=\|X^{\top}b\|_{\infty}$ and $\lambda_{min}=10^{-10}\lambda_{max}$. Then an equal-distributed partition on log-scale is employed to divide the interval $[\lambda_{min},\lambda_{max}]$ into $N=100$ subintervals.
For the parameter $\tau$, unless otherwise specified, we set $\tau=2.7$ and $\tau=3.7$ for the MCP and SCAD penalties, respectively. Due to the locally one step convergence of the PDAS method for $\ell_1$ regularized least squares problems, we set $K_{max}=1$. And we use the following two relative KKT residuals $R^1_{kkt}$ and $R^2_{kkt}$ to measure the accuracy of the approximate optimal solutions in different stages,
\begin{align}\label{kkt1}
R^1_{kkt}&:=\frac{\Big\|\beta-\prox_{\lambda \|\cdot\|_1}\Big(\beta-X^{\top}(X \beta -y)\Big)\Big\|}{1+\|\beta\|+\|X^{\top}(X \beta-y)\|},\\
R^2_{kkt}&:=\frac{\Big\|\beta-\prox_{\lambda \|\cdot\|_1-q(\cdot)}\Big(\beta-X^{\top}(X \beta -y)\Big)\Big\|}{1+\|\beta\|+\|X^{\top}(X \beta-y)\|},
\end{align}
where the closed form of $\prox_{\lambda \|\cdot\|_1-q(\cdot)}$ can refer to \cite{GZL2013}. Then the PDASC method for solving the internal subproblems is terminated if $R^1_{kkt}<1e-6$, and the PMM algorithm will be terminated if $R^2_{kkt}<1e-6$. In addition, we fix some low-impact parameters, such as $\sigma^1=\sigma^{2,0}=\gamma=0.1$. The values of other parameters will be given in the context of specific issues.

In addition, for the purpose of highlighting the efficiency and accuracy of PMM algorithm in the subsequent simulation comparison, we compare it with the latest SSN method and the classic CD algorithm from the perspective of the following four indicators based on 100 independent experiments:
\begin{itemize}
  \item The average CPU time (Time, in seconds);
  \item The average $\ell_2$ relative error: $RE:=\frac{\sum_{m=1}^{100}\big(\frac{\|\hat{\beta}^{(m)}-\beta^*\|_2}{\|\beta^*\|_2}\big)}{100}$;
  \item The average estimated model size: $MS:=\frac{\sum_{m=1}^{100}|\hat{\A}^{(m)}|}{100}$;
  \item The proportion of correct models: $CM:=\frac{\sum_{m=1}^{100}\delta\{\hat{\A}^{(m)}=\A^*\}}{100}$,
\end{itemize}
where $\hat{\beta}$ and $\hat{\A}$ are the estimated regression coefficient and active set, respectively. $|\A|$ indicates the length of set $\A$, and $\delta\{\hat{\A}^{(m)}=\A^*\}=\left\{
\begin{array}{ll}
1, &\hat{\A}^{(m)}=\A^*\\
0, &\hat{\A}^{(m)}\neq \A^*
\end{array}
\right.$. Clearly, the smaller Time, the faster calculation speed. And the smaller RE, the closer MS approaches to $K$, the closer CM approaches to $100\%$, the higher the solution quality.

\subsection{The behavior of PMM algorithm}

In this part, we analyze the computational behavior of the PMM algorithm based on 100 independent experiments and consider the problem setting with $n=300$, $p=1000$, $K=10$, $\sigma_1=0.1$, $R=100$. Here we only give the results related to the MCP penalty, since SCAD penalty will produce a similar phenomenon.

Firstly, we utilize a box plot to investigate the performance of variable selection and parameter estimation for the PMM algorithm. To achieve the goal, we generate a coefficient matrix $X$ with $\kappa=0.2$ and a fixed true regression parameter $\beta^*$, whose 10 non-zero elements are $\beta^*_{30}=6$, $\beta^*_{198}=-11$, $\beta^*_{269}=-10$, $\beta^*_{395}=25$, $\beta^*_{442}=-8$, $\beta^*_{495}=100$, $\beta^*_{637}=-9$, $\beta^*_{766}=-10$, $\beta^*_{777}=5$, $\beta^*_{865}=1$. In view of the large $p$, we only describe the estimation effect of non-zero elements in $\beta^*$ on the left side of Figure \ref{fig:1}. Obviously, for each non-zero element, the estimated results fluctuate very little in 100 independent experiments, which fully illustrates the effectiveness and stability of the PMM algorithm. In addition, the private experiment shows that the positions which should be zero are all 0. Therefore, we conclude that the PMM algorithm can simultaneously realize variable selection and parameter estimation.

Next, we examine the calculation speed of PMM algorithm from the perspective of the number of iterations. Based on 100 independent experiments, we show the average number of iterations with different sparsity levels on the right side of Fig. \ref{fig:1}. In view of the stop condition $\|\beta(\lambda_j)\|_0\geq n/log(p)$ in step 10 of PDASC method, here we consider $K=5:5:40$, which means that the sparsity level $K$ varies from 5 to 40 by step 5. In addition, we also take the correlation into consideration and set $\kappa=[0.3;0.5;0.7]$. It can be seen that for the three correlation coefficients, the average number of iterations of the PMM algorithm does not exceed 4. This phenomena fully illustrates that the calculation speed of the PMM algorithm is very fast.

\begin{figure}[h]
\centering
\includegraphics[width=0.5\textwidth]{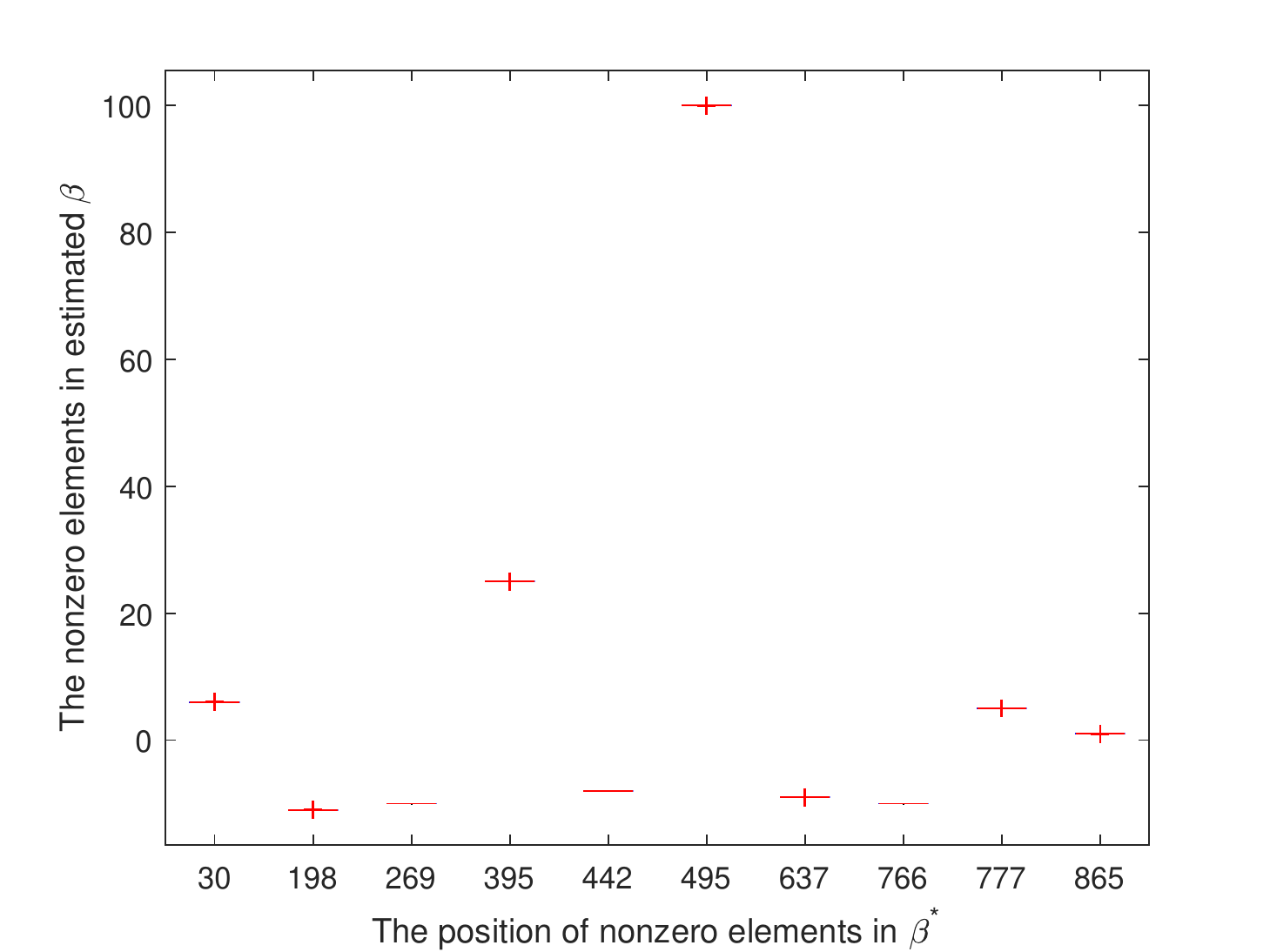}\hspace{-.4cm}
\includegraphics[width=0.5\textwidth]{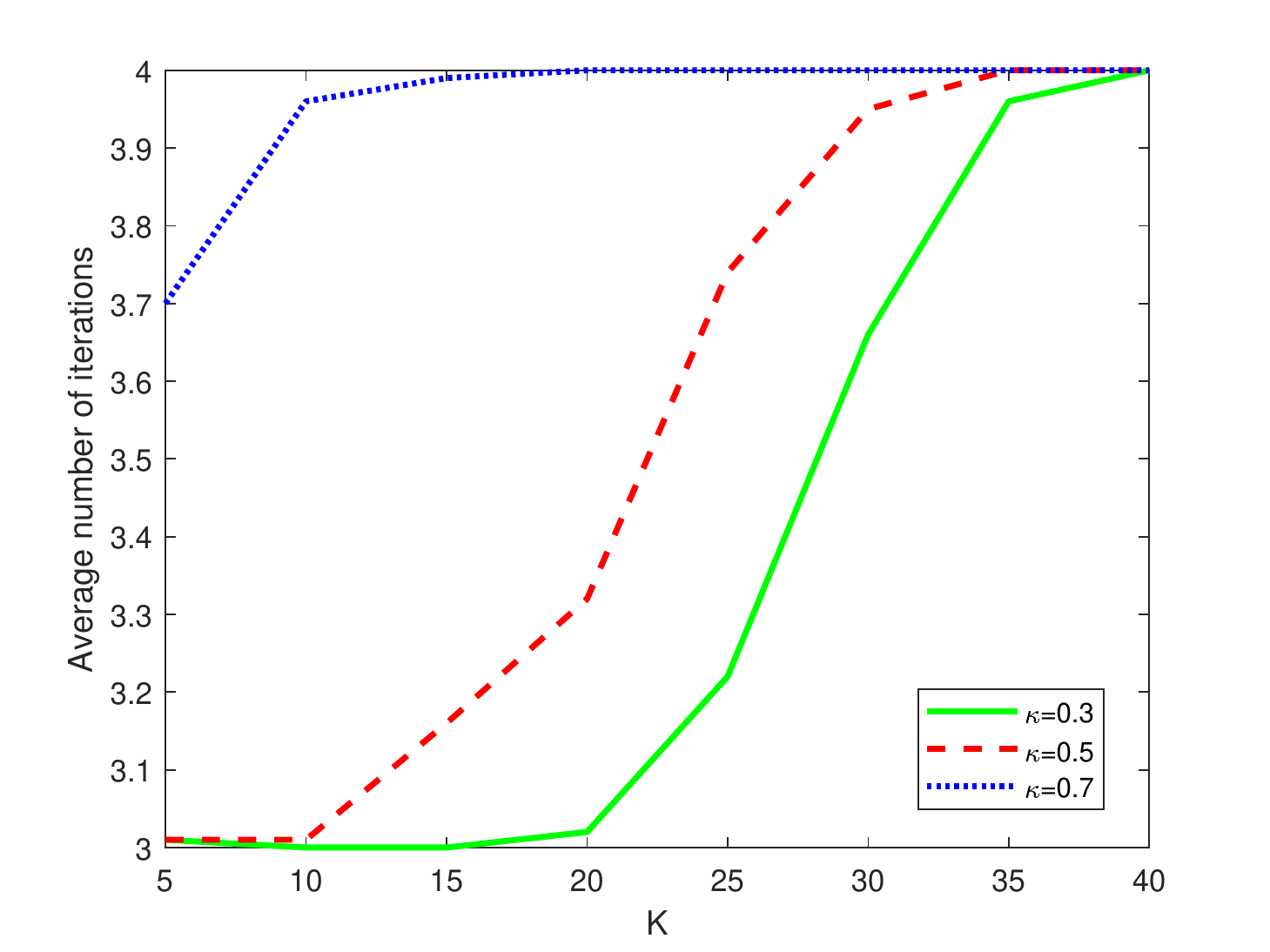}
\caption{{\small The behavior of PMM algorithm for MCP penalized linear regression problems based on 100 independent experiments}}
\label{fig:1}
\end{figure}

\subsection{Comparison with SSN algorithm}

In this part, we compare the PMM algorithm with the latest SSN algorithm for solving non-convex penalized high-dimensional linear regression problems based on 100 independent experiments. We set $p=2000$ with $n=\lfloor \frac{p}{5}\rfloor$ and $K=\lfloor \frac{n}{2 log(p)}\rfloor$, where $\lfloor x \rfloor$ denotes the integer part of $x$ for $x\geq 0$. Here, we set $\sigma_1=0.1$ and consider three levels of correlation, i.e., $\kappa=[0.3;0.5;0.7]$. It can be observed from the MATLAB package of SSN algorithm that the authors in \cite{SHJ2018} lead into a key parameter ``Weight" which represents the step size in the programming process. After testing, we find that the effectiveness of SSN algorithm is heavily dependent on this parameter. Here we only consider two values of 0.5 and 0.9. In addition, for the sake of fairness, we use the same continuation method for the regularization parameter in SSN algorithm, unify the maximum number of iterations to 1, and other parameters are consistent with their original papers. Simulation results are summarized in Table \ref{tab01}.


\begin{table}[!ht]
\centering {\small\caption{Simulation results of SSN and PMM algorithms}
\begin{tabular}{|c|c|c|c|c|c|c|c|}
\hline
$\kappa$ & Weight  & Penalty & Method & Time & MS & CM & RE         \\
\hline
\multirow{8}{*}{0.3} & \multirow{4}{*}{0.5} & \multirow{2}{*}{MCP} & SSN & 0.04 & 26.01 & 99\% & 1.00e-4 \\
& & & PMM & 0.07 & 26.00 & 100\% & 1.48e-4   \\
\cline{3-8}
& & \multirow{2}{*}{SCAD} & SSN & 0.05 & 26.00 & 100\% & 1.00e-4  \\
& & & PMM & 0.07 & 26.00 & 100\% & 1.48e-4   \\
\cline{2-8}
& \multirow{4}{*}{0.9} & \multirow{2}{*}{MCP} & SSN & 0.05 & 26.40 & 78\% & 2.00e-4 \\
& & & PMM & 0.08 & 26.00 & 100\% & 1.48e-4   \\
\cline{3-8}
& & \multirow{2}{*}{SCAD} & SSN & 0.06 & 26.76 & 54\% & 1.73e-4  \\
& & & PMM & 0.07 & 26.00 & 100\% & 1.48e-4   \\
\cline{1-8}
\multirow{8}{*}{0.5} & \multirow{4}{*}{0.5} & \multirow{2}{*}{MCP} & SSN & 0.04 & 25.95 & 98\% & 5.00e-4 \\
& & & PMM & 0.08 & 26.00 & 100\% & 1.41e-4   \\
\cline{3-8}
& & \multirow{2}{*}{SCAD} & SSN & 0.05 & 25.96 & 99\% & 4.00e-4  \\
& & & PMM & 0.07 & 26.00 & 100\% & 1.41e-4   \\
\cline{2-8}
& \multirow{4}{*}{0.9} & \multirow{2}{*}{MCP} & SSN & 0.06 & 26.33 & 80\% & 2.00e-4 \\
& & & PMM & 0.08 & 26.00 & 100\% & 1.41e-4   \\
\cline{3-8}
& & \multirow{2}{*}{SCAD} & SSN & 0.06 & 26.62 & 65\% & 1.60e-4  \\
& & & PMM & 0.07 & 26.00 & 100\% & 1.41e-4   \\
\cline{1-8}
\multirow{8}{*}{0.7} & \multirow{4}{*}{0.5} & \multirow{2}{*}{MCP} & SSN & 0.03 & 23.94 & 68\% & 6.60e-2 \\
& & & PMM & 0.09 & 26.00 & 100\% & 1.46e-4   \\
\cline{3-8}
& & \multirow{2}{*}{SCAD} & SSN & 0.05 & 25.40 & 93\% & 1.50e-2  \\
& & & PMM & 0.09 & 26.00 & 100\% & 1.46e-4   \\
\cline{2-8}
& \multirow{4}{*}{0.9} & \multirow{2}{*}{MCP} & SSN & 0.04 & 24.96 & 60\% & 2.76e-2 \\
& & & PMM & 0.09 & 26.00 & 100\% & 1.46e-4   \\
\cline{3-8}
& & \multirow{2}{*}{SCAD} & SSN & 0.06 & 26.42 & 67\% & 1.61e-4  \\
& & & PMM & 0.09 & 26.00 & 100\% & 1.46e-4   \\
\hline
\end{tabular}\label{tab01}
}
\end{table}

From the information in Table \ref{tab01}, we can see that the calculation speed of SSN algorithm is very fast, which thanks to its local super-linear convergence. However, since its performance is heavily dependent on the selection of the step size, the results under the fixed step sizes $0.5$ and $0.9$ are incomparable with PMM algorithm at present. Therefore, in view of the fact that the SSN algorithm need to carefully adjust the step size under different problem settings, we will only compare the algorithm in this paper with the classic CD algorithm detailly in the subsequent numerical experiments.

\subsection{Comparison with CD algorithm}

In this section, we compare our PMM algorithm with the CD algorithm in \cite{BH2011} for solving (\ref{primal}) which is summarized in Algorithm \ref{algorithm3}. To be fair, we here use the same continuation method for regularization parameter $\lambda$ and the same stop condition $R^2_{kkt}<1e-6$ at step 8. In addition, we also set $K_{max}=1$ to improve the calculation speed of the CD algorithm.

\begin{framed}\label{algorithm3}
\noindent
{\bf CD algorithm}
\vskip 1.0mm \hrule \vskip 1mm
\noindent
\begin{itemize}
\item[Step 0.] Given $\lambda$, $\beta^0=\textbf{0}$, $r^0=y-X\beta^0$, $K_{max}\in \mathbb{N}$.
For $k=0,1,\ldots,K_{max}$, do the following operations iteratively.
\item[Step 1.] For $i=1,2,\ldots,p$, do the following operations iteratively.
\begin{itemize}
\item[Step 1.1.] Calculate $z_i^k=X_i^{\top}r^k+\beta_i^k$, where $X_i$ is the $i$th column of $X$ and $r^k=y-X \beta^k$ is the current residual value.
\item[Step 1.2.] Update
$\beta_i^{k+1}=\prox_{\lambda \|\cdot\|_1-q(\cdot)}(z_i^k)$.
\item[Step 1.3.] Update $r^{k+1}=r^k-(\beta_i^{k+1}-\beta_i^k)X_i$.
\end{itemize}
\item[Step 2.] Check stop condition, if stop, denote the last iteration by $\hat{\beta}$. Else, $k:=k+1.$
\end{itemize}
\end{framed}

\subsubsection{Efficiency and accuracy}

In this part, we compare the efficiency and accuracy of the PMM algorithm and the CD algorithm based on 100 independent experiments. We set $p=2000$ and $5000$ with $n=\lfloor \frac{p}{5}\rfloor$ and $K=\lfloor \frac{n}{2 log(p)}\rfloor$. We consider three levels of correlation ($\kappa=[0.3;0.5;0.7]$) and two levels of noises ($\sigma_1=[0.1;1]$). Simulation results are summarized in Table \ref{tab1}.

\begin{table}[!ht]
\centering {\scriptsize\caption{Simulation results of CD and PMM algorithms}
\begin{tabular}{|c|c|c|c|c|c|c|c|c|}
\hline
p & $\kappa$ & $\sigma_1$ & Penalty & Method & Time & MS & CM & RE         \\
\hline
\multirow{24}{*}{2000} & \multirow{8}{*}{0.3} & \multirow{4}{*}{0.1} & \multirow{2}{*}{MCP} & CD & 0.69 & 26.00 & 100\% & 1.48e-4  \\
& & & & PMM & 0.08 & 26.00 & 100\% & 1.48e-4   \\
& & & \multirow{2}{*}{SCAD} & CD & 0.70 & 26.00 & 100\% & 1.48e-4  \\
& & & & PMM & 0.08 & 26.00 & 100\% & 1.48e-4  \\
\cline{3-9}
& &\multirow{4}{*}{1} & \multirow{2}{*}{MCP} & CD & 0.52 & 26.00 & 100\% & 1.50e-3  \\
& & &  & PMM & 0.07 & 26.00 & 100\% & 1.50e-3  \\
& & & \multirow{2}{*}{SCAD} & CD & 0.52 & 26.01 & 99\% & 1.50e-3 \\
& & &  & PMM & 0.08 & 26.00 & 100\% & 1.50e-3 \\
\cline{2-9}
& \multirow{8}{*}{0.5} & \multirow{4}{*}{0.1} & \multirow{2}{*}{MCP} & CD & 0.70 & 26.00 & 100\% & 1.41e-4  \\
& & & & PMM & 0.08 & 26.00 & 100\% & 1.41e-4\\
& & & \multirow{2}{*}{SCAD} & CD & 0.71 & 26.00 & 100\% & 1.41e-4 \\
& & & & PMM & 0.08 & 26.00 & 100\% & 1.41e-4 \\
\cline{3-9}
& &\multirow{4}{*}{1} & \multirow{2}{*}{MCP} & CD & 0.53 & 26.00 & 100\% & 1.40e-3 \\
& & &  & PMM & 0.07 & 26.00 & 100\% & 1.40e-3 \\
& & & \multirow{2}{*}{SCAD} & CD & 0.54 & 26.00 & 100\% & 1.40e-3 \\
& & &  & PMM & 0.08 & 26.00 & 100\% & 1.40e-3 \\
\cline{2-9}
& \multirow{8}{*}{0.7} & \multirow{4}{*}{0.1} & \multirow{2}{*}{MCP} & CD & 0.70 & 26.00 & 100\% & 1.46e-4 \\
& & & & PMM & 0.10 & 26.00 & 100\% & 1.46e-4 \\
& & & \multirow{2}{*}{SCAD} & CD & 0.71 & 26.00 & 100\% & 1.46e-4 \\
& & & & PMM & 0.10 & 26.00 & 100\% & 1.46e-4 \\
\cline{3-9}
& &\multirow{4}{*}{1} & \multirow{2}{*}{MCP} & CD & 0.53 & 26.05 & 97\% & 1.60e-3 \\
& & &  & PMM & 0.08 & 26.00 & 100\% & 1.50e-3 \\
& & & \multirow{2}{*}{SCAD} & CD & 0.53 & 26.10 & 94\% & 1.60e-3 \\
& & &  & PMM & 0.07 & 26.00 & 100\% & 1.50e-3 \\
\hline
\multirow{24}{*}{5000} & \multirow{8}{*}{0.3} & \multirow{4}{*}{0.1} & \multirow{2}{*}{MCP} & CD & 2.84 & 58.00 & 100\% & 9.29e-5  \\
& & & & PMM & 1.07 & 58.00 & 100\% & 9.29e-5  \\
& & & \multirow{2}{*}{SCAD} & CD & 2.85 & 58.00 & 100\% & 9.29e-5  \\
& & & & PMM & 1.05 & 58.00 & 100\% & 9.29e-5  \\
\cline{3-9}
& &\multirow{4}{*}{1} & \multirow{2}{*}{MCP} & CD & 2.25 & 58.00 & 100\% & 9.29e-4  \\
& & &  & PMM & 1.04 & 58.00 & 100\% & 9.29e-4  \\
& & & \multirow{2}{*}{SCAD} & CD & 2.48 & 58.00 & 100\% & 9.00e-4  \\
& & &  & PMM & 1.03 & 58.00 & 100\% & 9.29e-4  \\
\cline{2-9}
& \multirow{8}{*}{0.5} & \multirow{4}{*}{0.1} & \multirow{2}{*}{MCP} & CD & 2.85 & 58.00 & 100\% & 9.34e-5 \\
& & & & PMM & 1.06 & 58.00 & 100\% & 9.34e-5  \\
& & & \multirow{2}{*}{SCAD} & CD & 2.85 & 58.00 & 100\% & 9.34e-5  \\
& & & & PMM & 1.03 & 58.00 & 100\% & 9.34e-5 \\
\cline{3-9}
& &\multirow{4}{*}{1} & \multirow{2}{*}{MCP} & CD & 2.45 & 58.00 & 100\% & 9.34e-4  \\
& & &  & PMM & 1.10 & 58.00 & 100\% & 9.34e-4  \\
& & & \multirow{2}{*}{SCAD} & CD & 2.49 & 58.00 & 100\% & 9.00e-4  \\
& & &  & PMM & 1.14 & 58.00 & 100\% & 9.34e-4 \\
\cline{2-9}
& \multirow{8}{*}{0.7} & \multirow{4}{*}{0.1} & \multirow{2}{*}{MCP} & CD & 2.87 & 58.00 & 100\% & 9.77e-5  \\
& & & & PMM & 1.31 & 58.00 & 100\% & 9.77e-5  \\
& & & \multirow{2}{*}{SCAD} & CD & 2.87 & 58.00 & 100\% & 9.77e-5  \\
& & & & PMM & 1.29 & 58.00 & 100\% & 9.77e-5 \\
\cline{3-9}
& &\multirow{4}{*}{1} & \multirow{2}{*}{MCP} & CD & 2.41 & 58.03 & 99\% & 9.96e-4  \\
& & &  & PMM & 1.37 & 58.00 & 100\% & 9.77e-4  \\
& & & \multirow{2}{*}{SCAD} & CD & 2.28 & 58.03 & 99\% & 1.00e-3  \\
& & &  & PMM & 1.40 & 58.00 & 100\% & 9.77e-4  \\
\hline
\end{tabular}\label{tab1}
}
\end{table}

From the results of MS, CM and RE in Table \ref{tab1}, it can be concluded that for each combination of $(p,\kappa,\sigma_1)$, the PMM algorithm can always achieve variable selection and parameter estimation very accurately. In addition, the PMM algorithm has better speed performance than CD algorithm for both MCP and SCAD, and PMM is about $2\sim9$ times faster than CD. In particular, for given penalty and method, the CPU time increases with the increase of $p$, and decreases with the increase of $\sigma_1$, but does not change much for different $\kappa$. In addition, it can be found that larger $p$ can improve the accuracy of both CD and PMM, while larger $\sigma_1$ has the opposite effect. Overall, the simulation results in Table \ref{tab1} illustrate that PMM outperforms CD in terms of CPU time while producing solutions of comparable quality.
\subsubsection{Influence of model parameters}

We now consider the effects of each of the model parameters $(n,p,K,\kappa,\sigma_1,\tau)$ on the performance of PMM and CD algorithms. Here we only give the results related to the MCP penalty, since SCAD penalty will produce a similar phenomenon. Based on 10 independent replications, we compare the performance of the considered methods in terms of average positive discovery rate (APDR), average false discovery rate (AFDR) and average combined discovery rate (ACDR)\cite{LC2014} defined as follows:
$$\text{APDR}=\frac{1}{10}\sum \frac{|\hat{\A}\bigcap \A^*|}{|\A^*|},\quad  \text{AFDR}=\frac{1}{10}\sum \frac{|\hat{\A}\bigcap \A^{*c}|}{|\hat{\A}|}, \quad \text{ACDR}=\text{APDR}+(1-\text{AFDR}),$$
where $\A^*$ denotes the true active set and $\A^{*c}$ denotes the complement of $\A^{*}$. Results of APDR, AFDR and ACDR over 10 independent replications are given in Fig. \ref{fig:2}-\ref{fig:4}, respectively. The parameters for solvers are set as follows.

\begin{itemize}
  \item Influence of the sample size $n$: We set $p=1000$, $K=10$, $\tau=2.7$, $\kappa=0.2$, $\sigma_1=0.1$, and take $n=20$ to $200$ with a step size $20$.
  \item Influence of the dimension $p$: We set $n=200$, $K=50$, $\tau=2.7$, $\kappa=0.2$, $\sigma_1=0.1$, and take $p=500$ to $1000$ with a step size $100$.
  \item Influence of the sparsity level $K$: We set $n=200$, $p=1000$, $\tau=2.7$, $\kappa=0.2$, $\sigma_1=0.1$, and take $K=10$ to $50$ with a step size $10$.
  \item Influence of the correlation level $\kappa$: We set $n=200$, $p=1000$, $K=40$, $\tau=2.7$, $\sigma_1=0.1$, and take $\kappa=0.1$ to $0.7$ with a step size $0.1$.
  \item Influence of the noise level $\sigma_1$: We set $n=200$, $p=1000$, $K=10$, $\tau=2.7$, $\kappa=0.2$, and take $\sigma_1\in \{0.1, 0.5, 1.0, 1.5, 2.0, 2.5\}$.
  \item Influence of the concavity parameter $\tau$: We set $n=200$, $p=1000$, $K=10$, $\kappa=0.2$, $\sigma_1=0.1$, and take $\tau\in \{1.1, 2.7, 5, 10\}$.
\end{itemize}

\begin{figure}[!ht]
\centering
\includegraphics[width=0.34\textwidth]{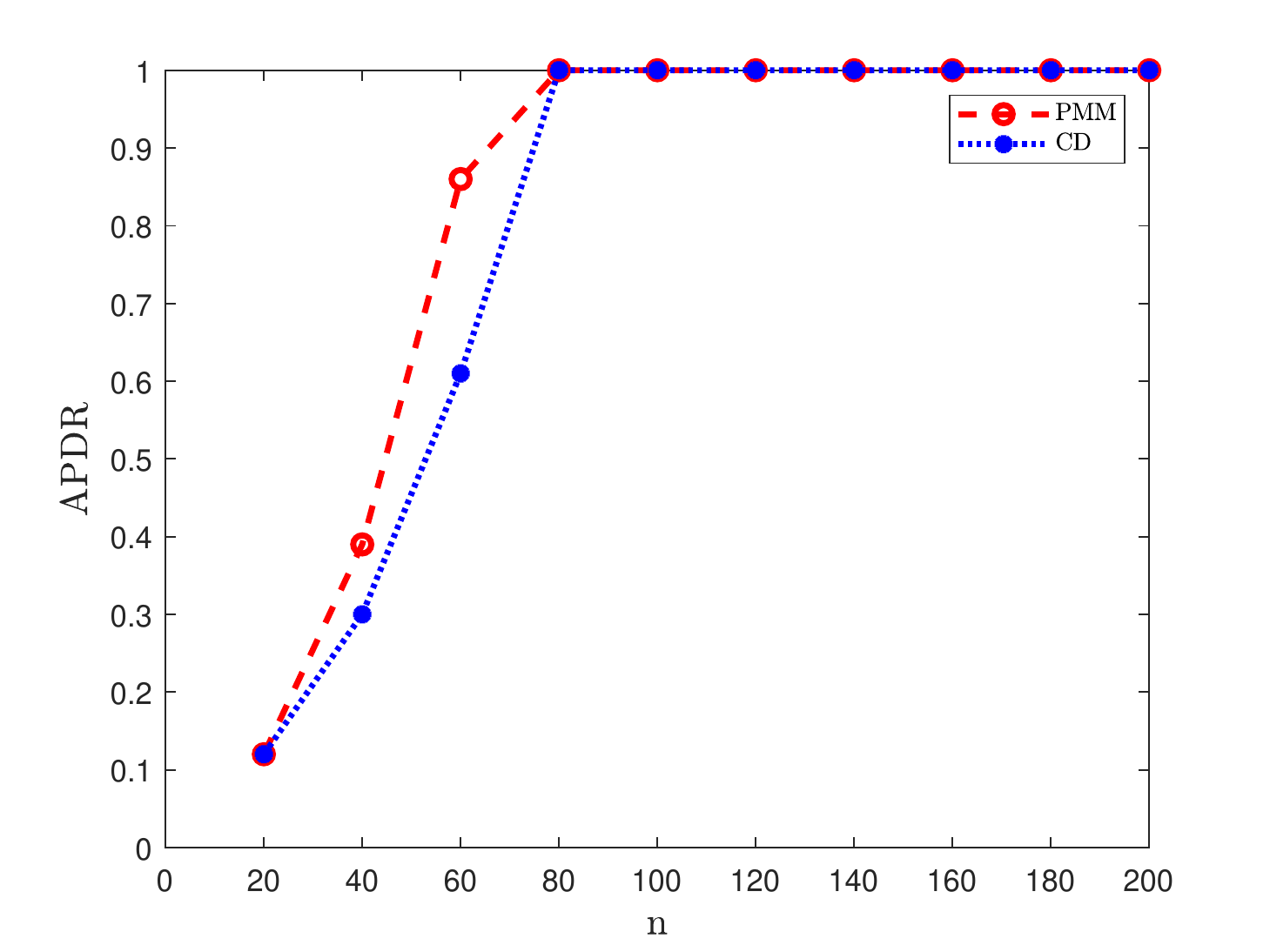}\hspace{-.4cm}
\includegraphics[width=0.34\textwidth]{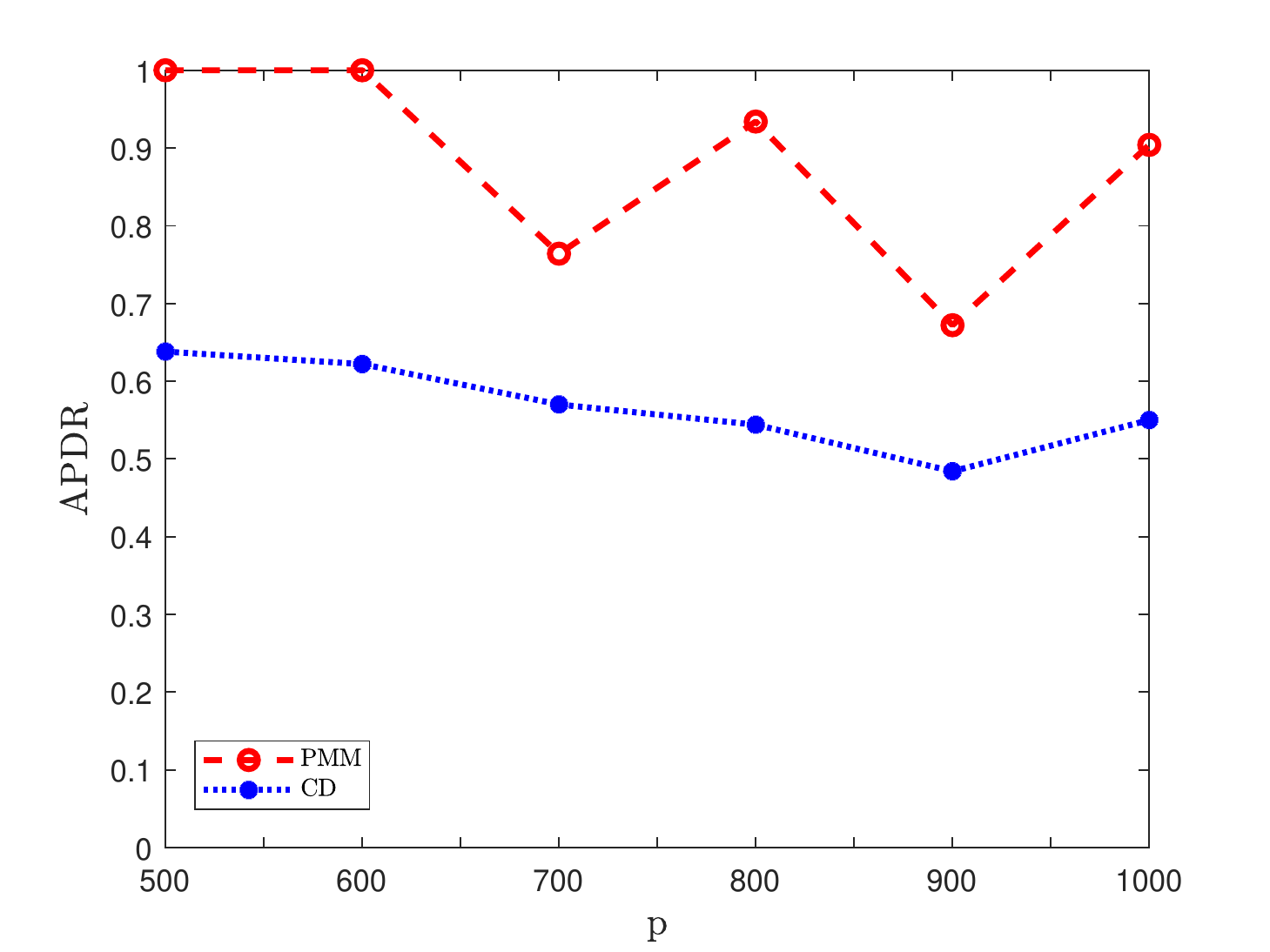}\hspace{-.4cm}
\includegraphics[width=0.34\textwidth]{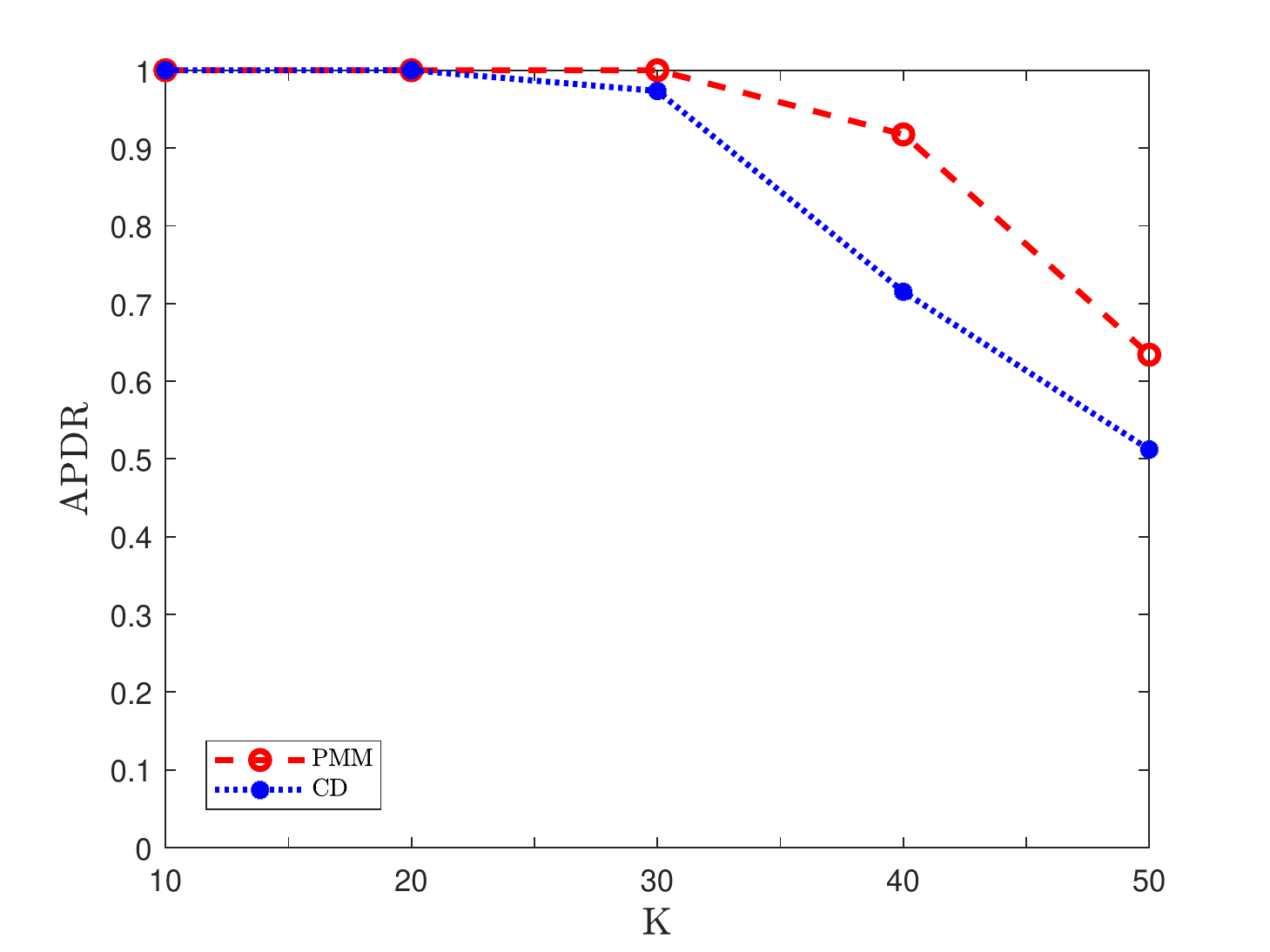}\\
\includegraphics[width=0.34\textwidth]{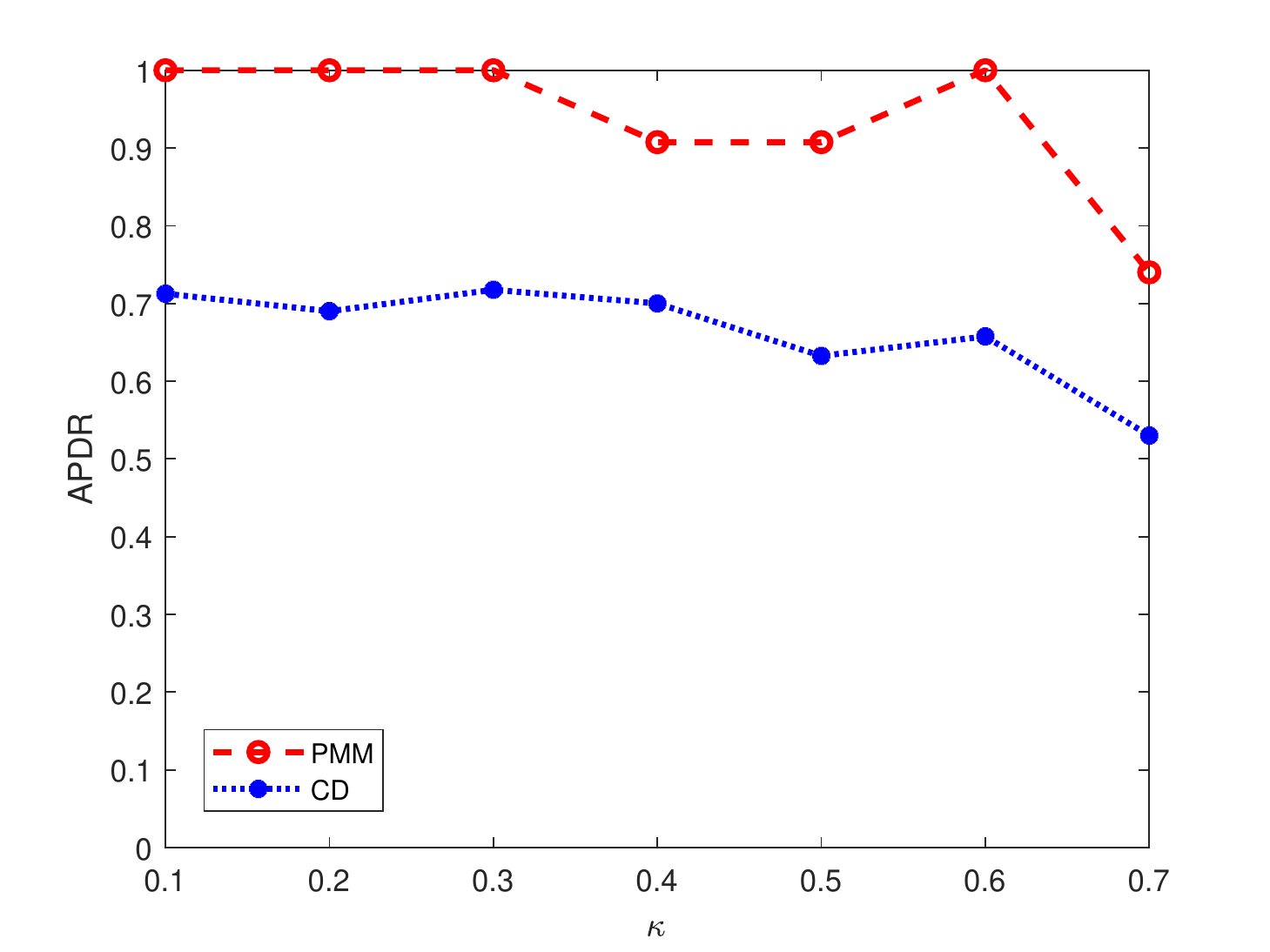}\hspace{-.4cm}
\includegraphics[width=0.34\textwidth]{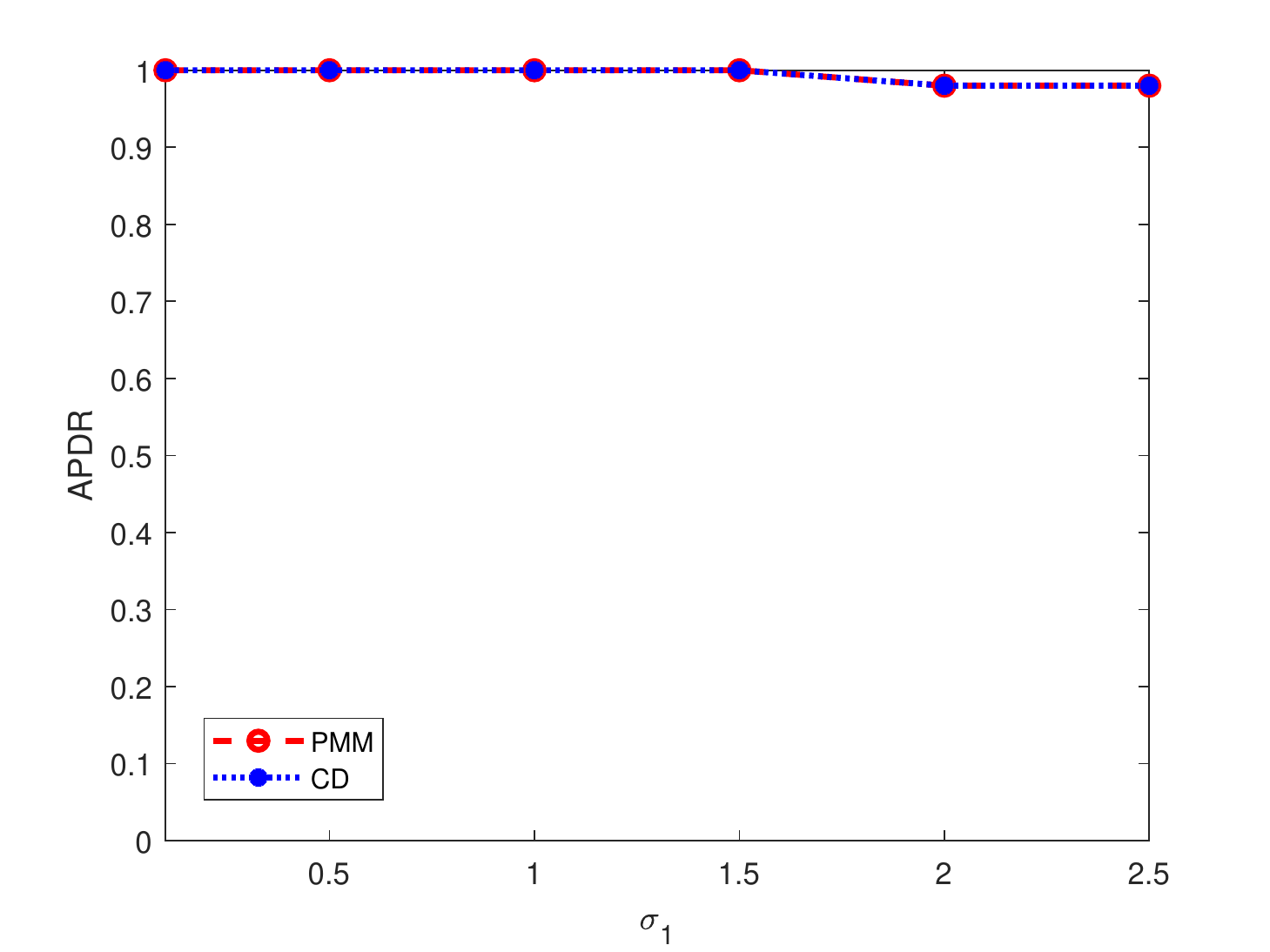}\hspace{-.4cm}
\includegraphics[width=0.34\textwidth]{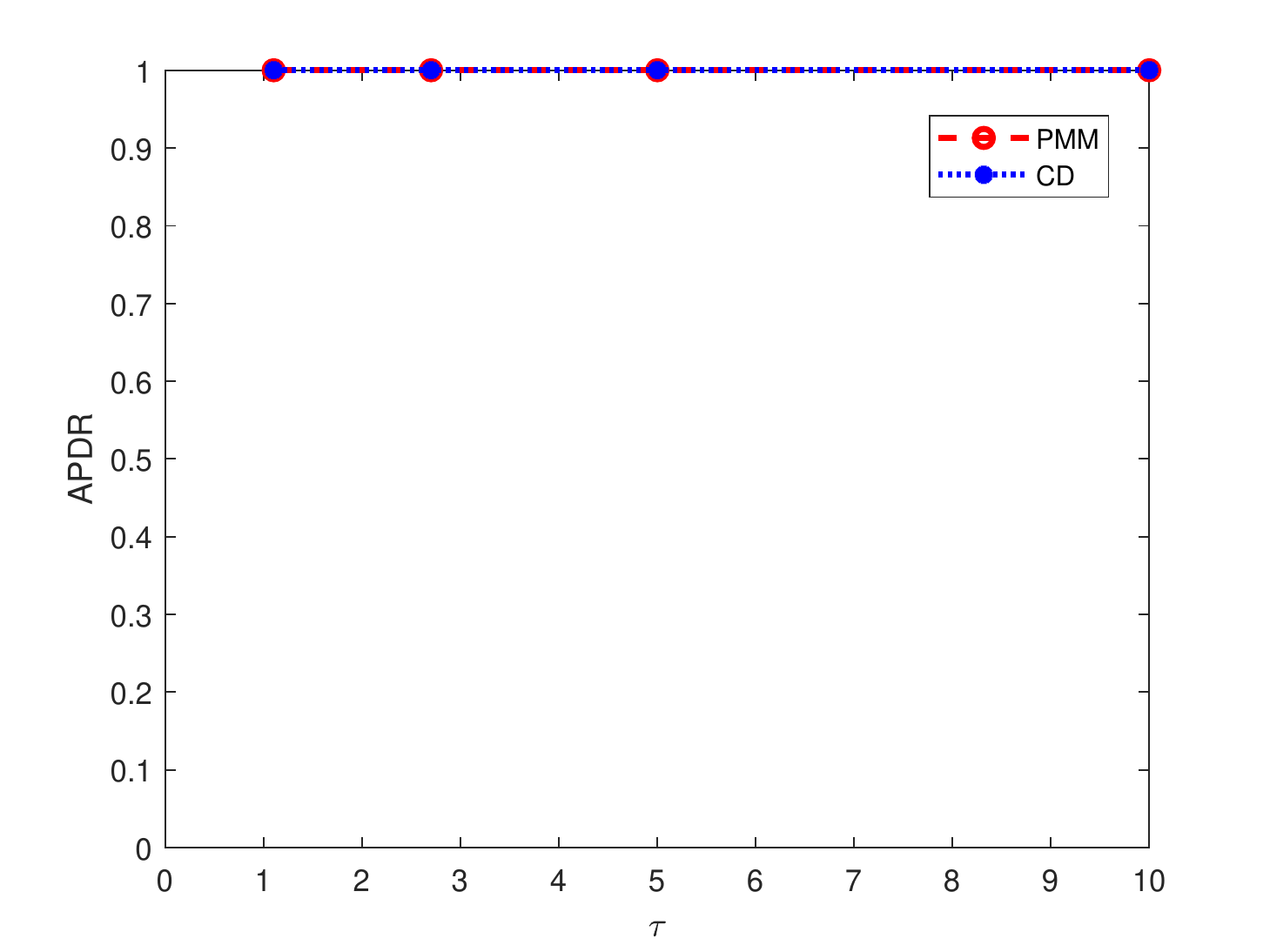}
\caption{{\small Numerical results of the influence of the model parameters on APDR}}
\label{fig:2}
\end{figure}

\begin{figure}[!ht]
\centering
\includegraphics[width=0.34\textwidth]{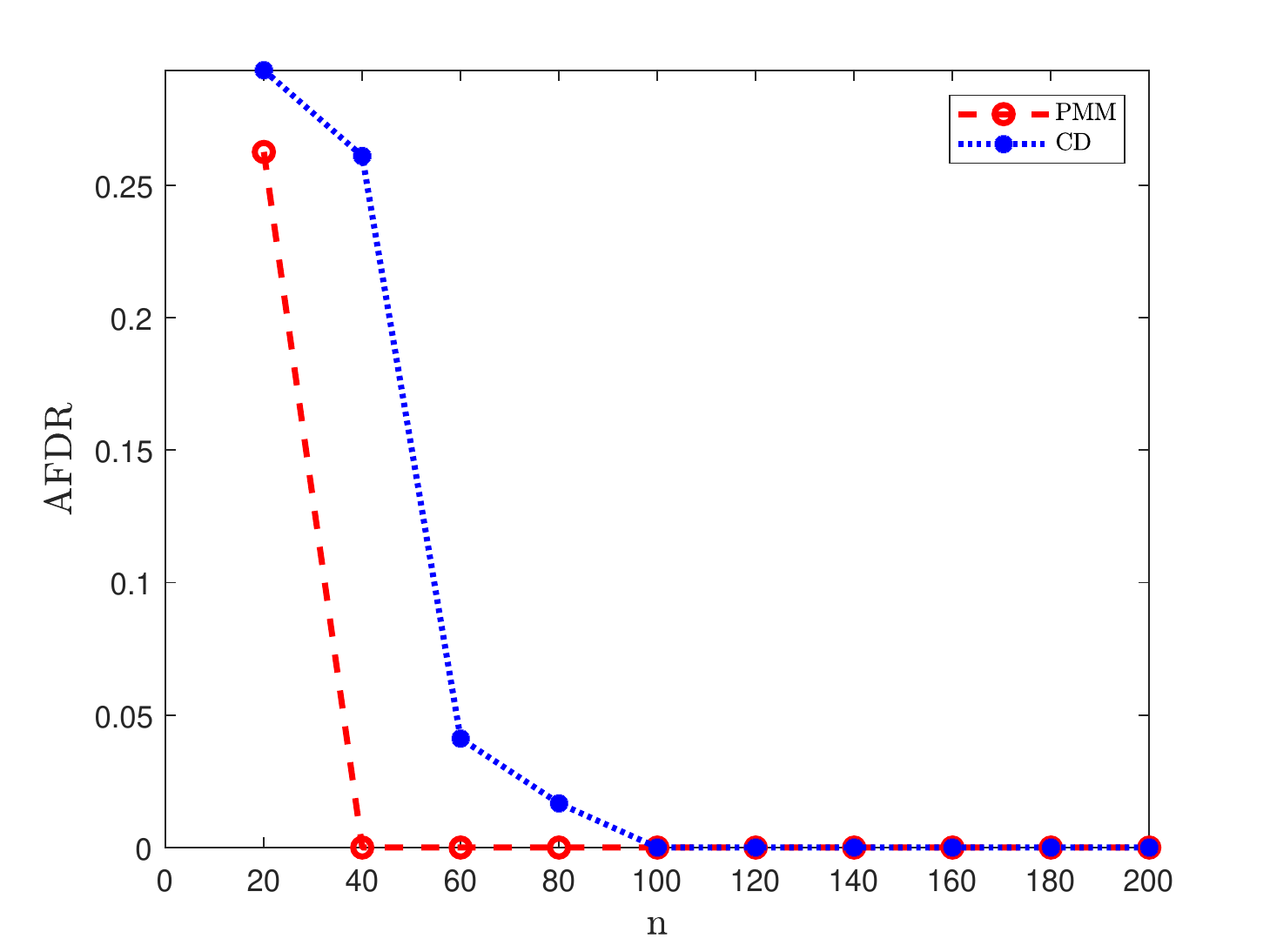}\hspace{-.4cm}
\includegraphics[width=0.34\textwidth]{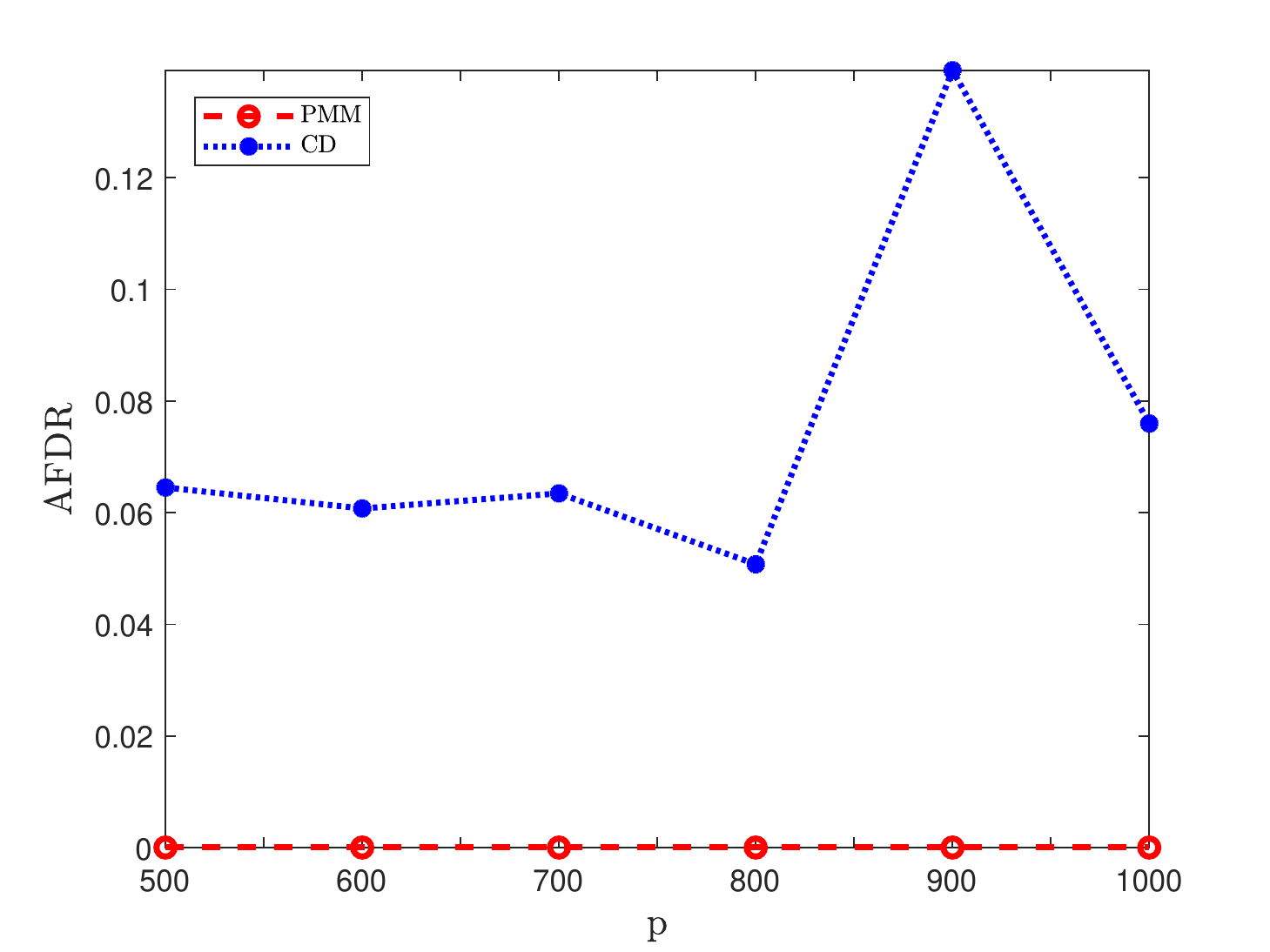}\hspace{-.4cm}
\includegraphics[width=0.34\textwidth]{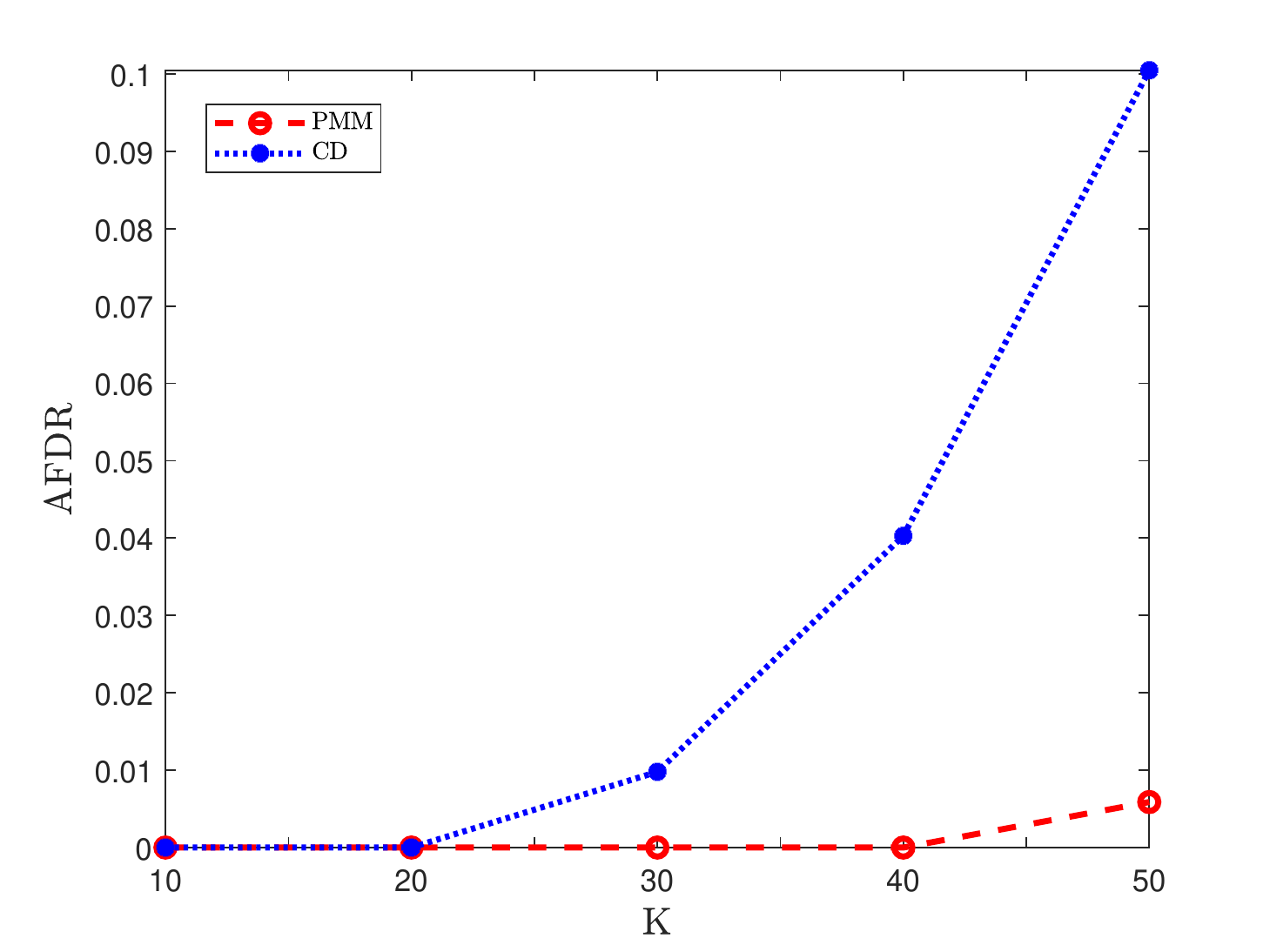}\\
\includegraphics[width=0.34\textwidth]{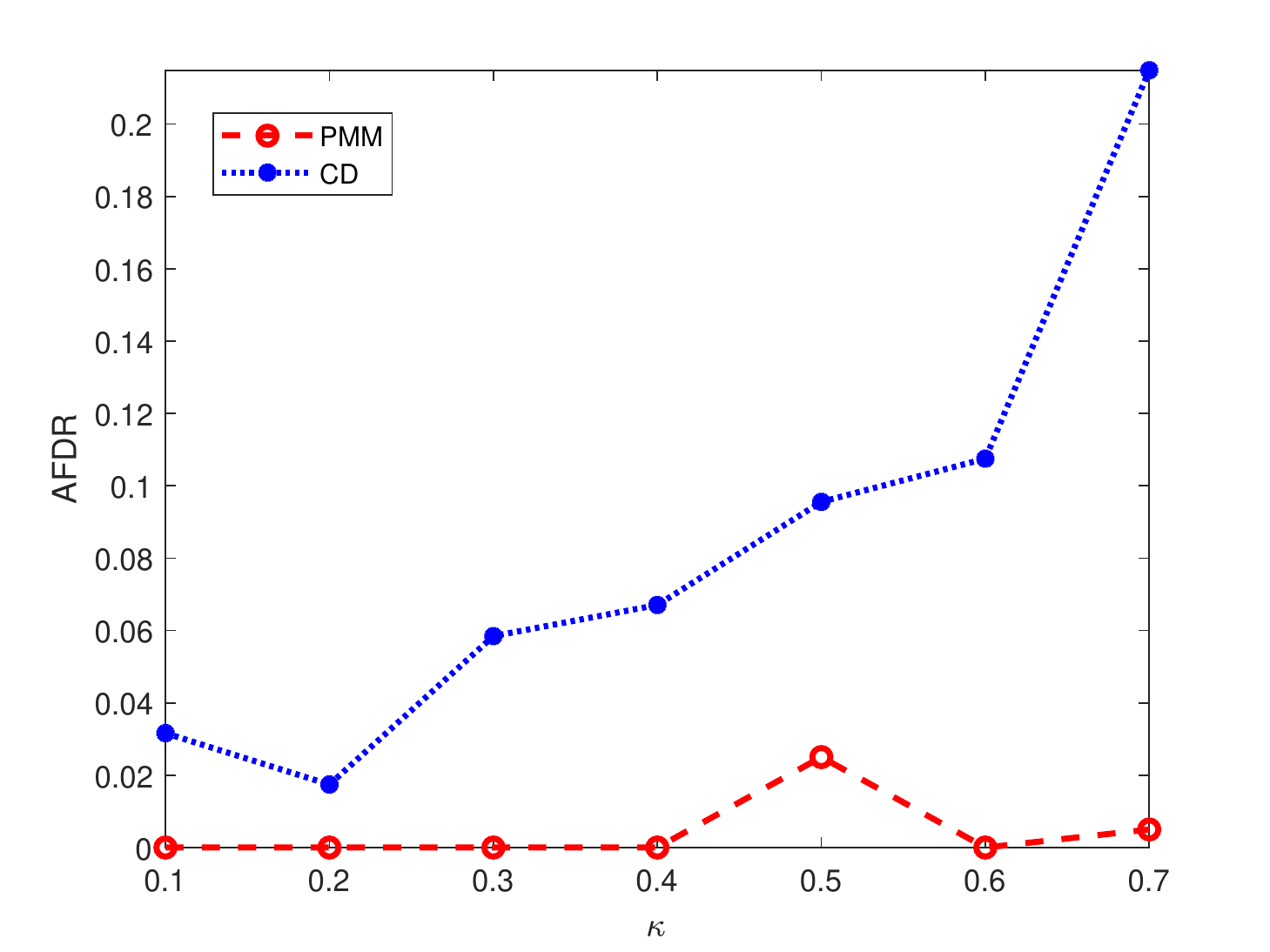}\hspace{-.4cm}
\includegraphics[width=0.34\textwidth]{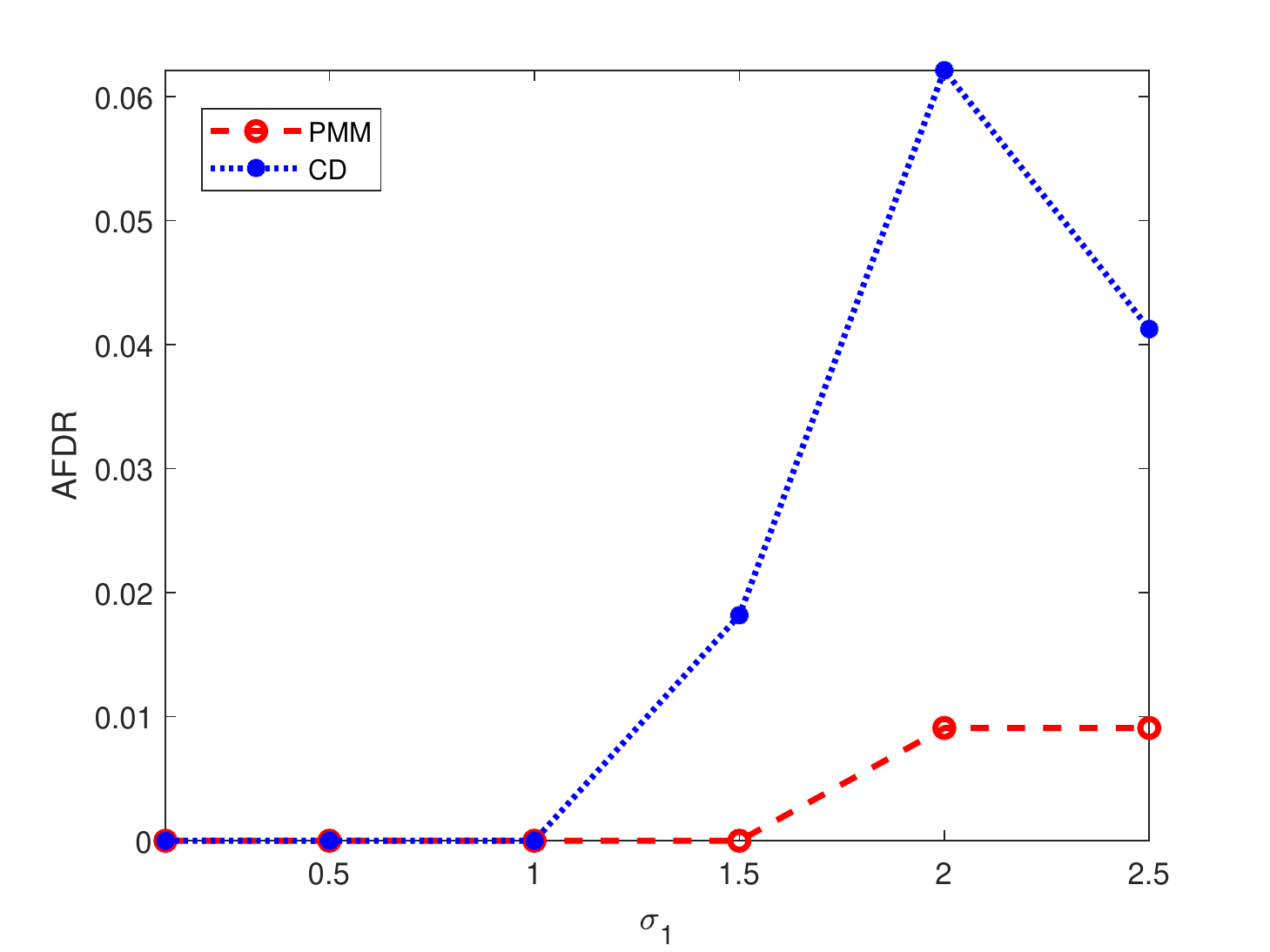}\hspace{-.4cm}
\includegraphics[width=0.34\textwidth]{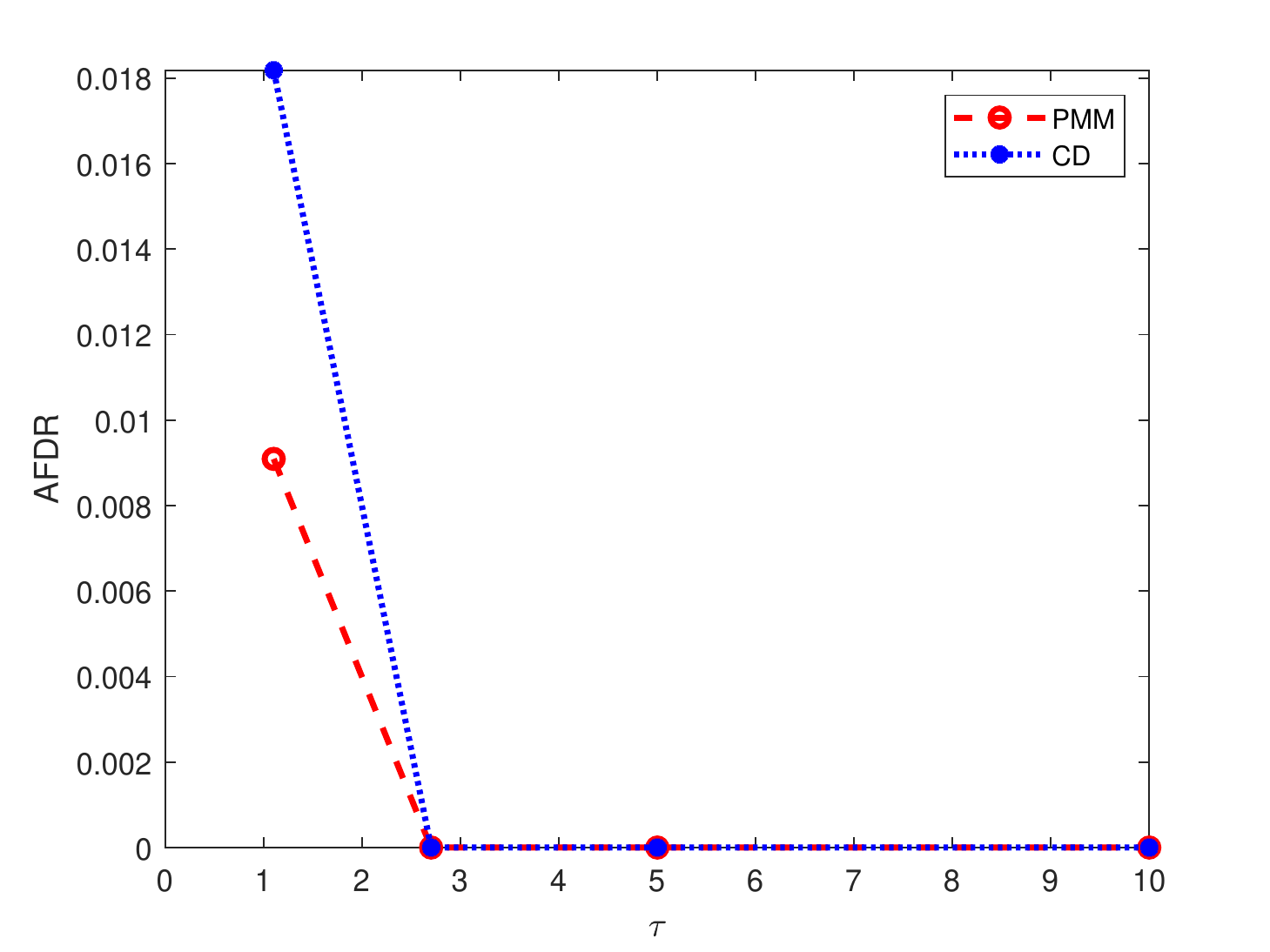}
\caption{{\small Numerical results of the influence of the model parameters on AFDR}}
\label{fig:3}
\end{figure}

\begin{figure}[!ht]
\centering
\includegraphics[width=0.34\textwidth]{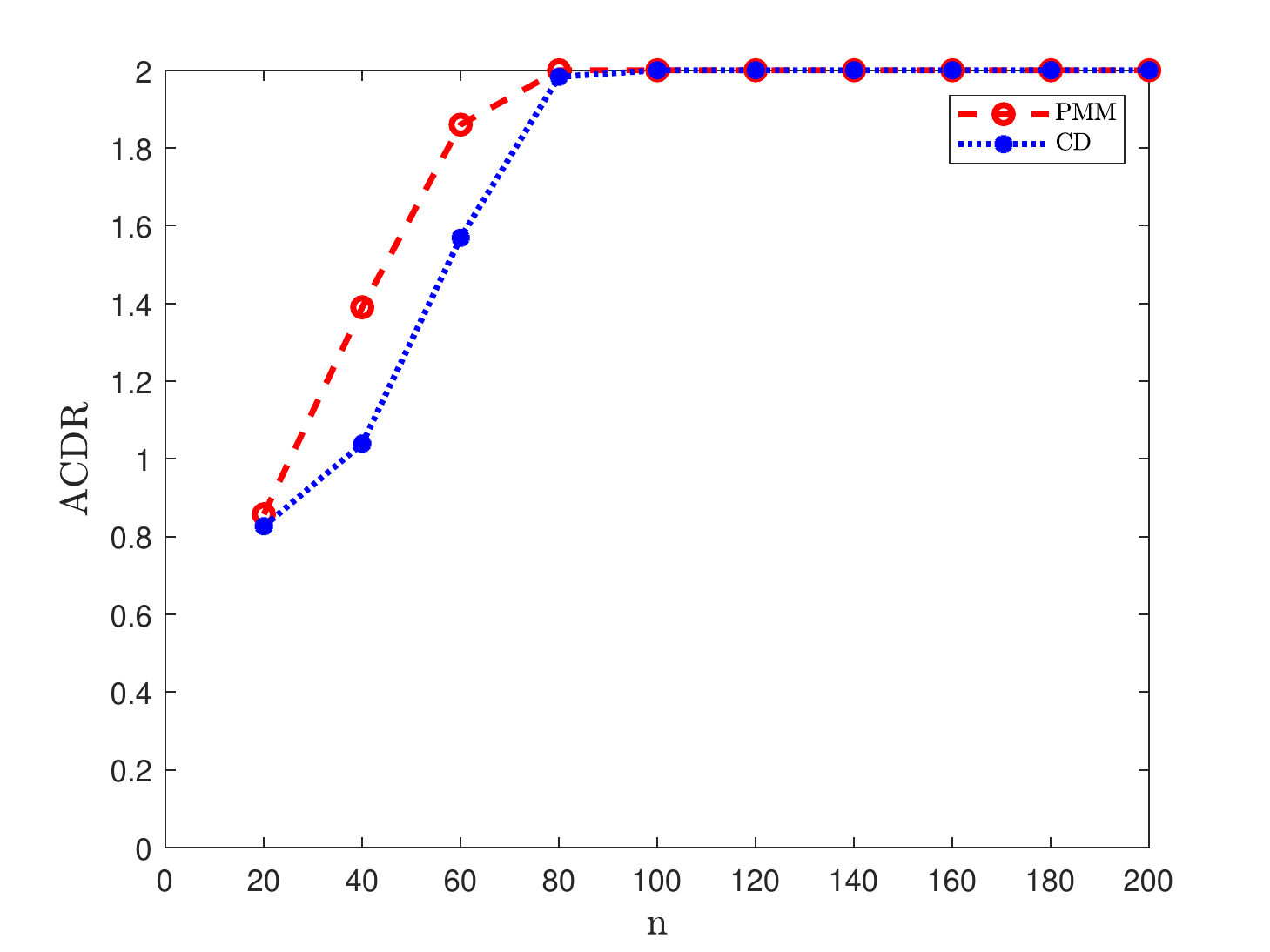}\hspace{-.4cm}
\includegraphics[width=0.34\textwidth]{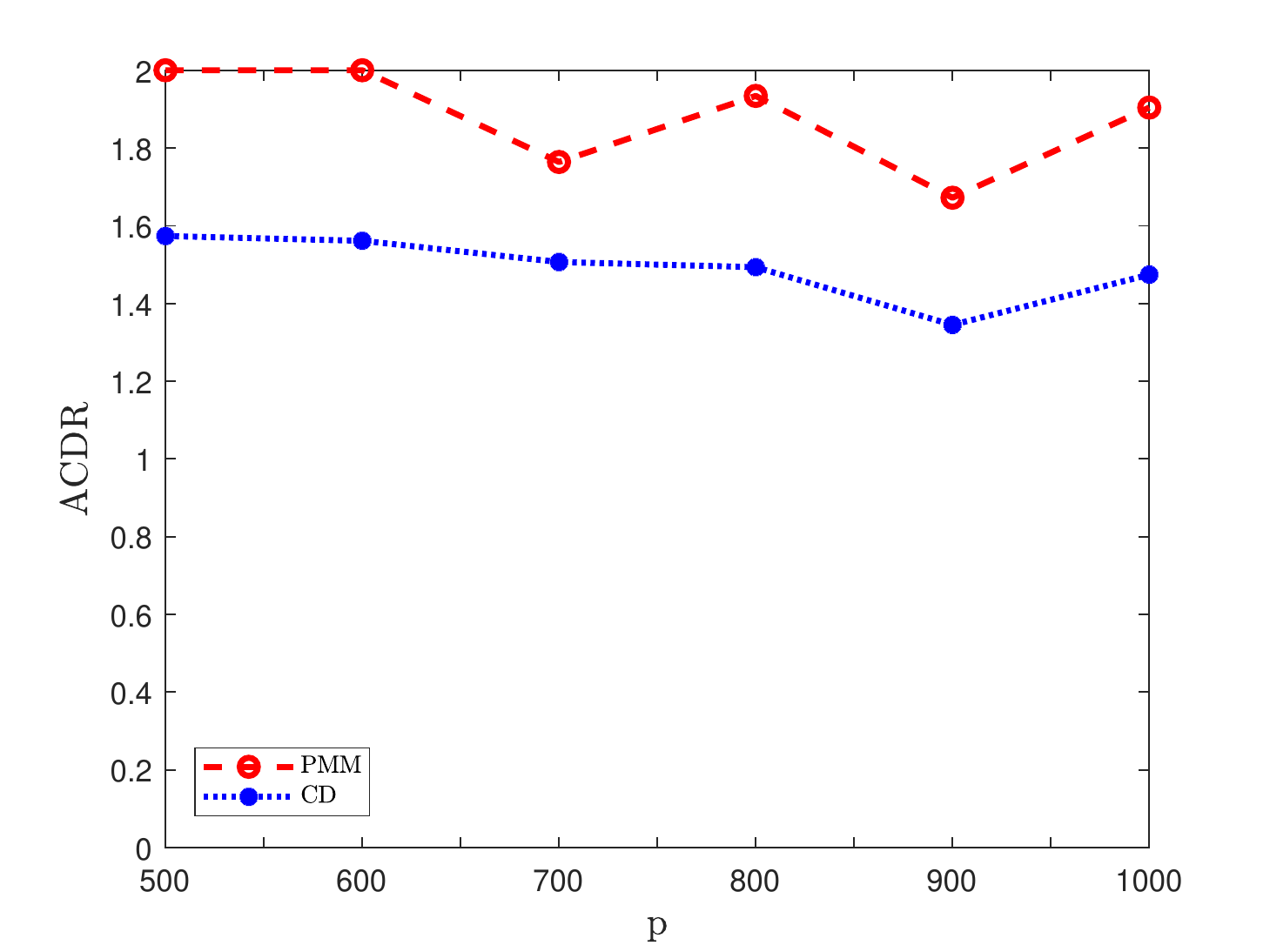}\hspace{-.4cm}
\includegraphics[width=0.34\textwidth]{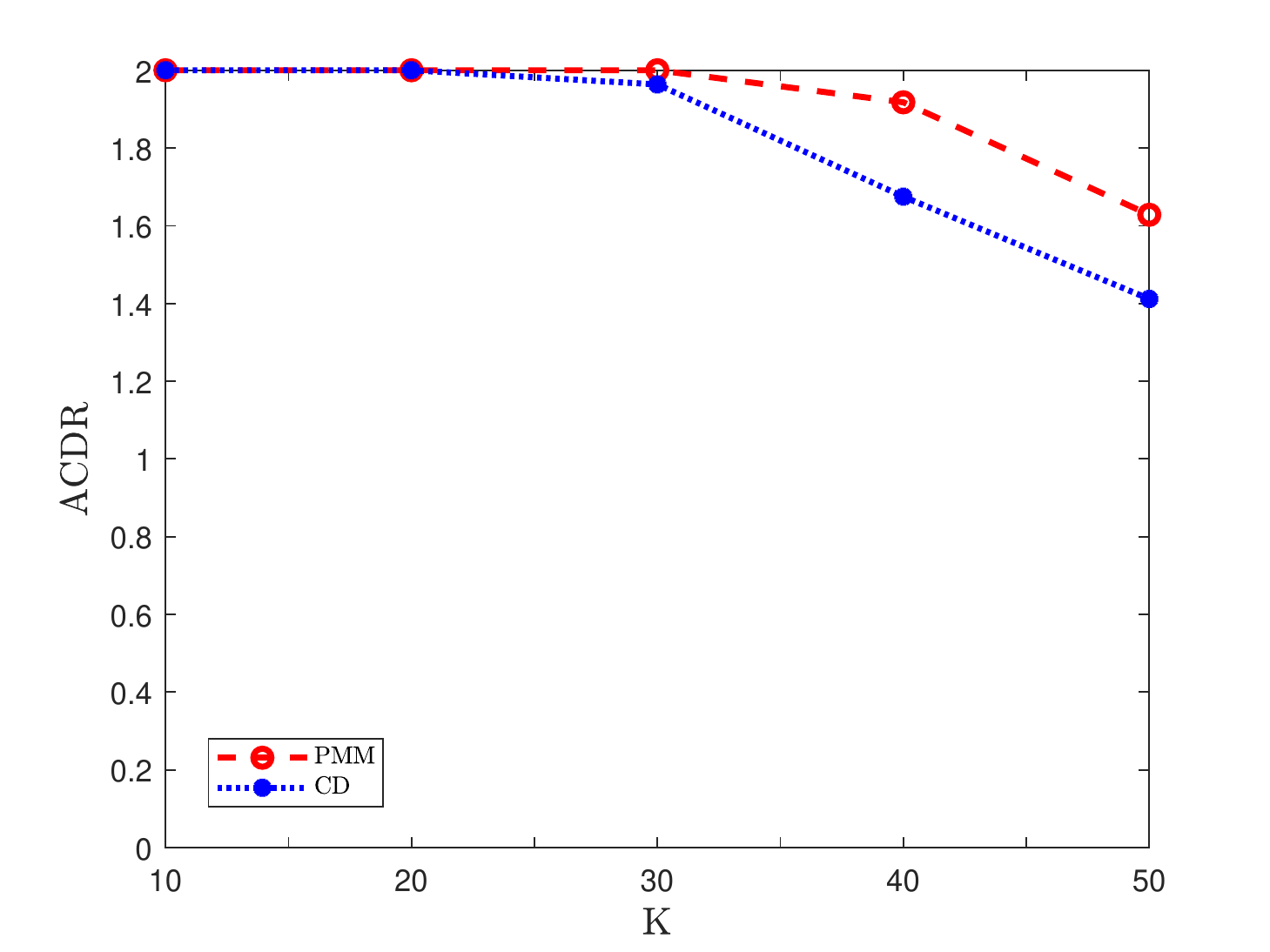}\\
\includegraphics[width=0.34\textwidth]{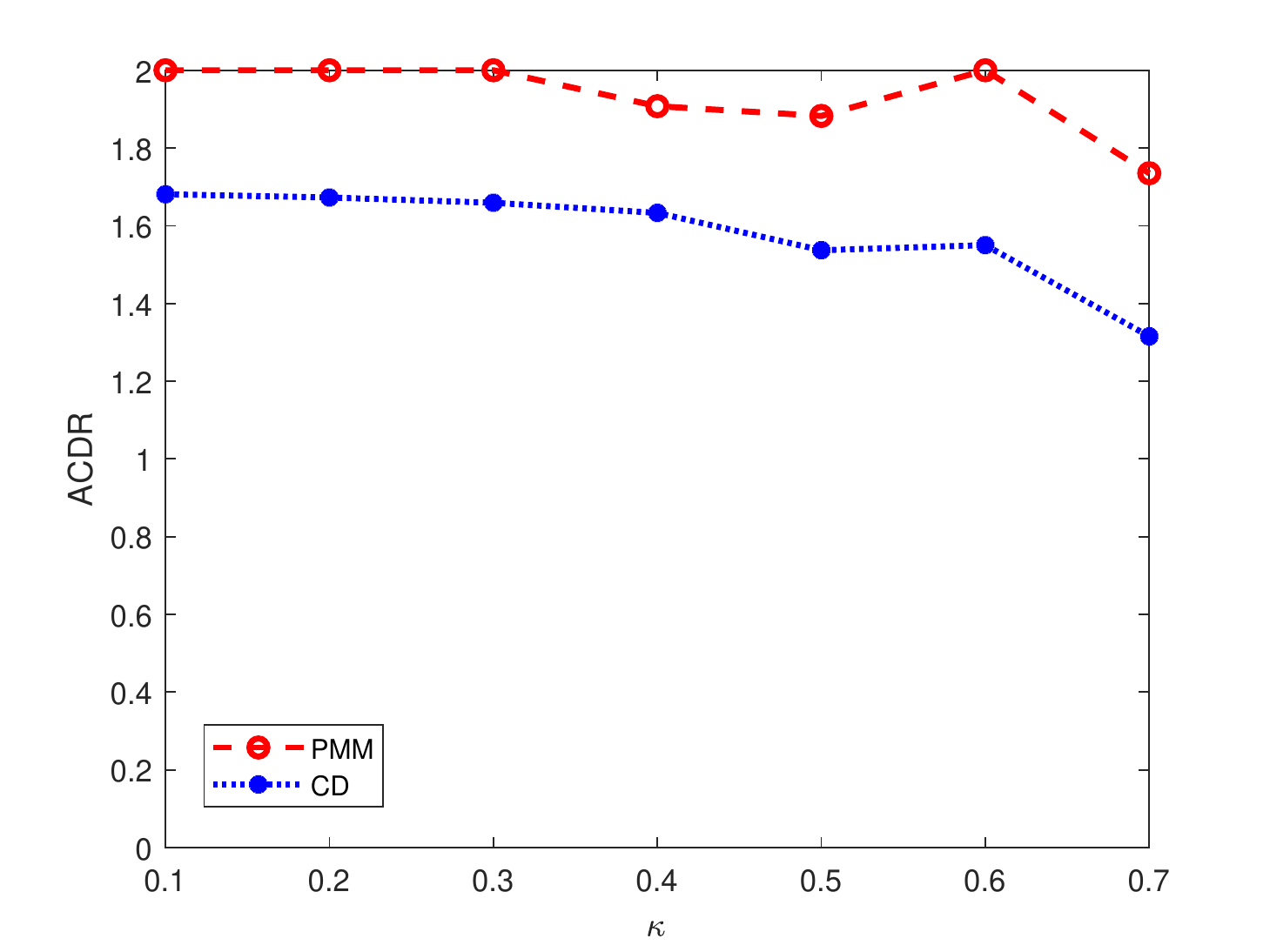}\hspace{-.4cm}
\includegraphics[width=0.34\textwidth]{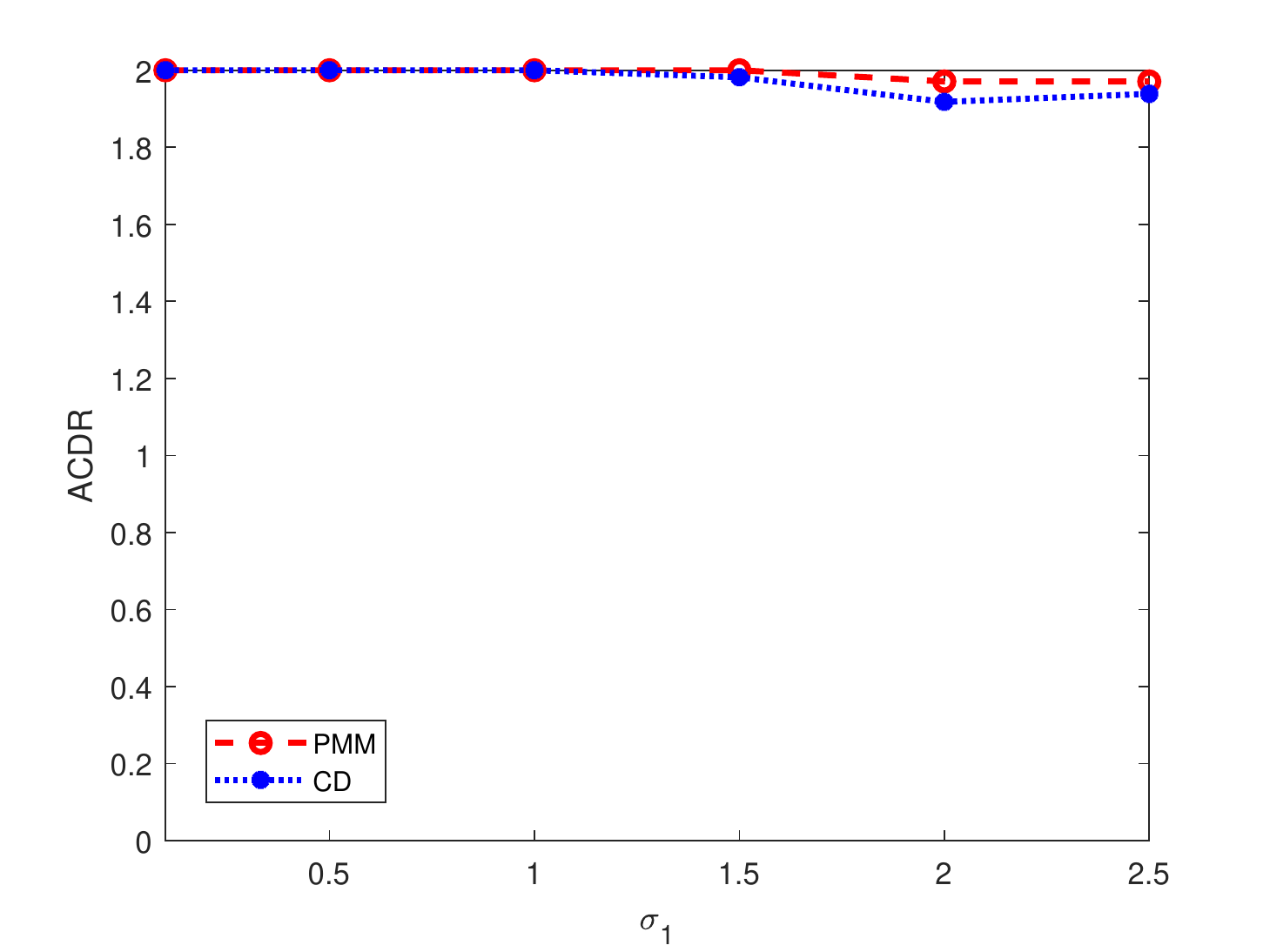}\hspace{-.4cm}
\includegraphics[width=0.34\textwidth]{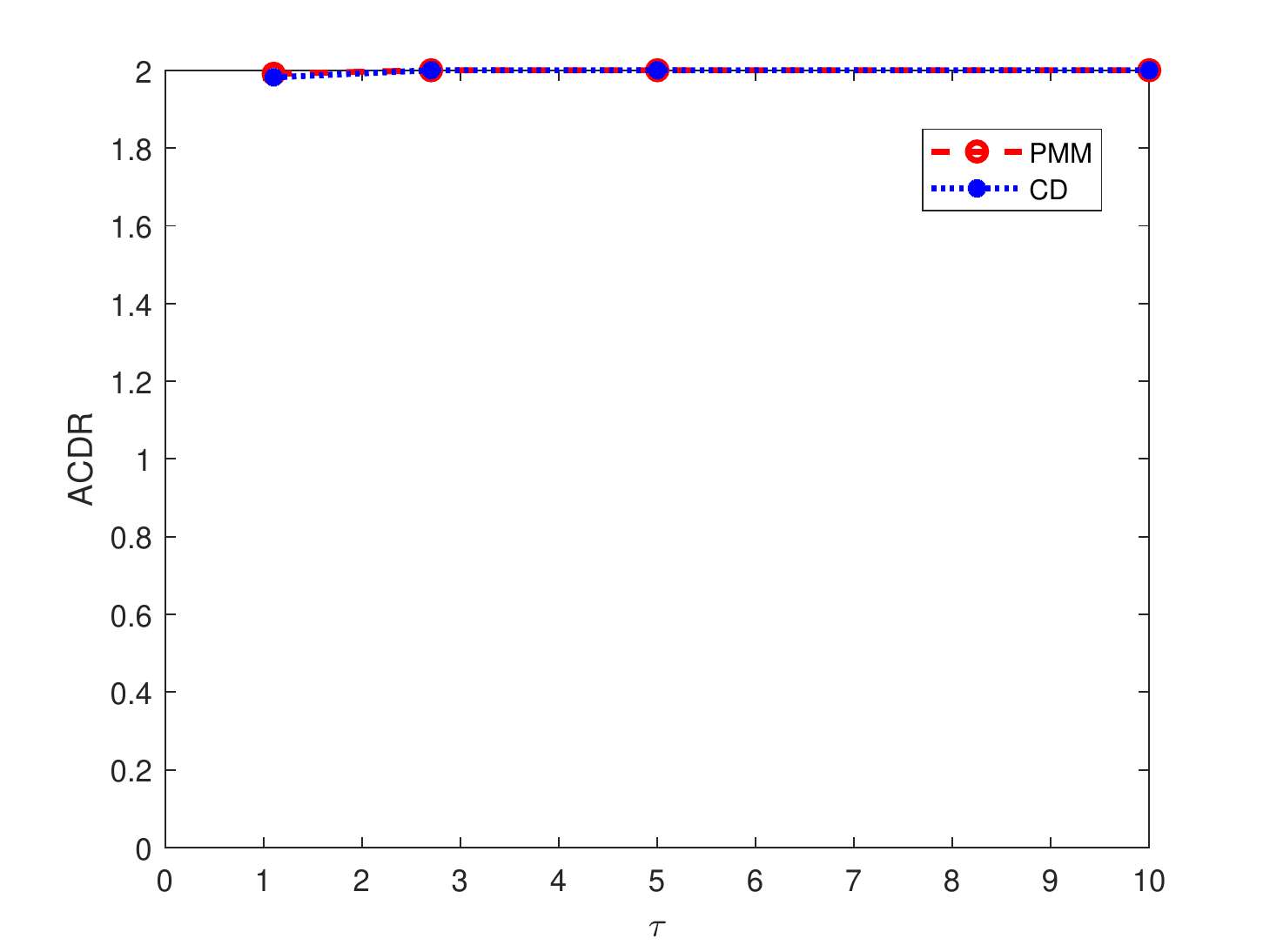}
\caption{{\small Numerical results of the influence of the model parameters on ACDR}}
\label{fig:4}
\end{figure}

From the definitions of APDR, AFDR and ACDR, we can conclude that the closer APDR approaches to 1, the closer AFDR approaches to 0, and the closer ACDR approaches to 2, the higher the solution quality. From Fig \ref{fig:2}-\ref{fig:4}, we can see that for the change interval of different parameters, PMM can always achieve more expected results. Therefore, compared with CD, PMM is more robust to the considered parameters for solving the MCP penalized least squares problems.
\subsubsection{Numerical comparison with real data}

In this subsection, we test CD and PMM algorithms with the test instances $(X,y)$ obtained from large-scale
regression problems in the LIBSVM data sets, which is available at \url{https://www.csie.ntu.edu.tw/~cjlin/libsvmtools/datasets}. These data sets are collected
from UCI, StatLib, Delve, 10-K Corpus, GWF01a. For computational efficiency, zero columns in $X$ are removed. As suggested in \cite{HJY2010}, in addition to the data sets \textbf{log1p.E2006.train} and \textbf{E2006.train}, we expand the original features of the remaining data sets by using polynomial basis functions over those features. For example, the last digit in \textbf{abalone7} indicates that an order 7 polynomial is used to generate the basis functions. This naming convention is also used in the rest of the expanded data sets. These test instances are quite difficult in terms of the problem dimensions and the largest eigenvalue of $X X^{\top}$, which is denoted as $\lambda_{max}(X X^{\top})$, one can refer to the first three columns of Table \ref{tab2}. It is worth noting that for these difficult real data, we appropriately reduce the accuracy requirements in the termination conditions. In addition to setting $R^2_{kkt}<5e-2$ for $\textbf{space\_ga9}$ and $R^2_{kkt}<8e-3$ for \textbf{bodyfat7}, we set $R^2_{kkt}<5e-3$ as the termination condition for all other data.

Table \ref{tab2} reports the detailed numerical results for CD and PMM in solving large-scale regression problems. In the table, ``NNZ" denotes the number of nonzeros in the estimated solution, and other symbols are the same as the previous simulation experiment. From the results in Table \ref{tab2}, we can see that PMM can solve all the instances to the desired accuracy despite the huge dimensions and the possibly badly conditioned data sets. More specifically, PMM is able to solve the
instance \textbf{log1p.E2006.train} with approximately 4.3 million features to accuracy $R^2_{kkt}=9.98e-6$ in 95 seconds. But CD only meets the accuracy requirement for \textbf{E2006.train}. In addition, for solving these data sets, CD needs much more time than PMM. For example, for the instance \textbf{cpusmall7}, we can see that PMM is at least 144 times faster than CD. The superior numerical performance of PMM indicates that it is a robust, high-performance solver for MCP/SCAD penalized high-dimensional linear regression problems.

\begin{table}[!ht]
\centering {\scriptsize\caption{Real results of CD and PMM algorithms}
\begin{tabular}{|c|c|c||c|c|c|c|c|}
\hline
Data name & n,p & $\lambda_{max}(X X^{\top})$ & Penalty & NNZ & Method & $R^2_{kkt}$ & Time        \\
\hline
\multirow{4}{*}{log1p.E2006.train} & \multirow{4}{*}{16087,4265669} & \multirow{4}{*}{5.86e+7} & \multirow{2}{*}{MCP} & \multirow{2}{*}{6} & CD & 3.91e-2 & 4.08e+3  \\
&  & & & & PMM  & 9.98e-6 & 9.39e+1 \\
&  & &  \multirow{2}{*}{SCAD} & \multirow{2}{*}{6}  & CD  & 5.36e-2 & 4.15e+3 \\
&  & &   & & PMM  & 9.97e-6 & 9.52e+1 \\
\cline{1-8}
\multirow{4}{*}{E2006.train} & \multirow{4}{*}{16087,150348} & \multirow{4}{*}{1.91e+5} & \multirow{2}{*}{MCP} & \multirow{2}{*}{6} & CD  & 2.50e-3 & 7.30e+1 \\
&  & & & &  PMM  & 4.90e-3 & 4.33e+1  \\
&  & &  \multirow{2}{*}{SCAD} & \multirow{2}{*}{6} & CD  & 2.50e-3 & 7.33e+1 \\
&  & & & & PMM  & 4.00e-3 & 3.68e+1 \\
\cline{1-8}
\multirow{4}{*}{abalone7} & \multirow{4}{*}{4177,6435} & \multirow{4}{*}{5.21e+5} & \multirow{2}{*}{MCP} & \multirow{2}{*}{7} & CD  & 9.66e-1 & 1.58e+1 \\
&  & & & & PMM  & 2.86e-4 & 1.51e+0 \\
&  & & \multirow{2}{*}{SCAD} & \multirow{2}{*}{7} & CD  & 9.64e-1 & 1.65e+1 \\
&  & & & & PMM & 2.77e-4 & 1.61e+0 \\
\cline{1-8}
\multirow{4}{*}{bodyfat7} &  \multirow{4}{*}{252,116280} & \multirow{4}{*}{5.29e+4} & \multirow{2}{*}{MCP} & \multirow{2}{*}{9} & CD  & 5.57e-2 & 1.18e+1  \\
&  & & & & PMM & 6.60e-3 & 2.30e+0 \\
&  & & \multirow{2}{*}{SCAD} & \multirow{2}{*}{1} & CD  & 7.63e-2 & 1.18e+1 \\
&  & & & & PMM  & 7.70e-3 & 2.23e+0 \\
\cline{1-8}
\multirow{4}{*}{cpusmall7} & \multirow{4}{*}{8192,50388} & \multirow{4}{*}{8.01e+7} & \multirow{2}{*}{MCP} & \multirow{2}{*}{1103} & CD  & 1.00e+0 & 2.97e+2 \\
&  & & & & PMM & 4.10e-3 & 2.05e+0 \\
&  & &  \multirow{2}{*}{SCAD} & \multirow{2}{*}{1105} & CD  & 1.00e+0 & 3.78e+2 \\
&  & & & & PMM  & 7.30e-4 & 4.57e+0  \\
\cline{1-8}
\multirow{4}{*}{housing7} & \multirow{4}{*}{506,77520} & \multirow{4}{*}{3.28e+5} & \multirow{2}{*}{MCP} & \multirow{2}{*}{44} & CD & 2.64e-2 & 1.48e+1  \\
&  & & & & PMM & 1.70e-3 & 2.38e-1 \\
&  & &  \multirow{2}{*}{SCAD} & \multirow{2}{*}{46} & CD & 3.62e-2 & 1.41e+1 \\
&  & &  & & PMM  & 8.74e-4 & 1.68e-1 \\
\cline{1-8}
\multirow{4}{*}{mg9} & \multirow{4}{*}{1385,5005} & \multirow{4}{*}{4.78e+3} & \multirow{2}{*}{MCP} & \multirow{2}{*}{9} & CD & 6.31e-1 & 2.29e+0  \\
&  & & & & PMM & 4.60e-3 & 1.56e+0  \\
&  & &  \multirow{2}{*}{SCAD} & \multirow{2}{*}{9} & CD  & 6.55e-1 & 2.21e+0  \\
&  & & & & PMM  & 4.50e-3 & 1.96e+0 \\
\cline{1-8}
\multirow{4}{*}{mpg7} & \multirow{4}{*}{392,3432} & \multirow{4}{*}{1.28e+4} & \multirow{2}{*}{MCP} & \multirow{2}{*}{26} & CD  & 4.92e-2 & 8.24e-1 \\
&  & & & & PMM  & 2.50e-3 & 3.65e-2 \\
&  & &  \multirow{2}{*}{SCAD} & \multirow{2}{*}{27} & CD & 6.75e-2 & 7.69e-1 \\
&  & & & & PMM  & 2.30e-3 & 3.75e-2 \\
\cline{1-8}
\multirow{4}{*}{pyrim5} & \multirow{4}{*}{74,169911} & \multirow{4}{*}{1.22e+6} & \multirow{2}{*}{MCP} & \multirow{2}{*}{326} & CD & 5.81e-2 & 1.07e+1  \\
&  & & & & PMM  & 4.92e-4 & 5.60e-2 \\
&  & &  \multirow{2}{*}{SCAD} & \multirow{2}{*}{327} & CD & 7.96e-2 & 9.94e+0 \\
&  & & & & PMM & 2.48e-4 & 5.37e-2 \\
\cline{1-8}
\multirow{4}{*}{$space\_ga9$} & \multirow{4}{*}{3107,5005} & \multirow{4}{*}{4.01e+3} & \multirow{2}{*}{MCP} & \multirow{2}{*}{8} & CD  & 1.89e-1 & 6.23e+0  \\
&  & & & & PMM  & 4.60e-2 & 1.99e+0  \\
&  & &  \multirow{2}{*}{SCAD} & \multirow{2}{*}{9} & CD  & 2.59e-1 & 6.26e+0 \\
&  & &  & & PMM  & 4.50e-2 & 2.24e+0 \\
\cline{1-8}
\multirow{4}{*}{triazines4} & \multirow{4}{*}{186,557845} & \multirow{4}{*}{2.08e+7} & \multirow{2}{*}{MCP} & \multirow{2}{*}{983} & CD  & 4.29e-2 & 9.61e+1  \\
&  & & & & PMM  & 7.39e-4 & 6.31e+1 \\
&  & &  \multirow{2}{*}{SCAD} & \multirow{2}{*}{983} & CD & 5.87e-2 & 9.48e+1 \\
&  & &  & & PMM & 1.86e-4 & 6.67e+1 \\
\hline
\end{tabular}\label{tab2}
}
\end{table}

\section{Conclusion}\label{con}
Based on the DC property of MCP and SCAD penalties, we developed a global two-stage algorithm for the MCP/SCAD penalized linear regression problems in high-dimensional
settings. A key idea for making the proposed algorithm to be efficient is to use the PDASC algorithm to solve the corresponding sub-problems. We established the global convergence of the proposed algorithm and verified the iterative sequence converges to a d-stationary point of the considered problems. Finally, a large number of inspiring numerical experiments have once again verified the effectiveness of the proposed algorithm.

Since each non-convex penalty can be expressed as the difference of two convex functions, the research in this paper can be directly extended to other non-convex penalized high-dimensional linear regression problems. In addition, extending the algorithm in this paper to the regression problems with other loss functions is also a very interesting and promising research direction.
\section*{Acknowledgements}
The work of Zhou Yu is supported in part by the National Natural Science Foundation of China (Grant No. 11971170).

\bibliographystyle{Chicago}
\bibliography{pmm}

\end{document}